\documentclass[pra,twoside,twocolumn,longbibliography]{revtex4-2}
\usepackage{amsmath}
\usepackage{amssymb}
\usepackage{amsfonts}
\usepackage{amsthm}
\newcommand{\vecg}{\boldsymbol}
\renewcommand{\vec}{\textbf}
\newcommand{\ket}[1]{|#1\rangle}
\newcommand{\bra}[1]{\langle#1|}

\DeclareMathOperator{\tr}{Tr} 
\usepackage{subcaption}
\usepackage{graphicx}
\usepackage{color}
\newtheorem{theorem}{Theorem}
\newtheorem{lemma}{Lemma}

\begin{document}

\title{Violation of Bell inequalities in $2\times3$ dimensional systems}

\author{Pawe{\l}{} Caban}
\email{Pawel.Caban@uni.lodz.pl (corresponding author)}
\affiliation{Department of Theoretical Physics,
University of {\L}{\'o}d{\'z}\\
Pomorska 149/153, 90-236 {\L}{\'o}d{\'z}, Poland}
\author{Pawe{\l}{} Horodecki}
\email{Pawel.Horodecki@ug.edu.pl}
\affiliation{International Centre for Theory of Quantum Technologies (ICTQT),
University of Gdańsk,\\
Jana Bażynskiego 8, 80-309 Gdańsk, Poland\\
and\\
Faculty of Applied Physics and Mathematics, Gdańsk University of Technology,\\
Gabriela Narutowicza 11/12, 80-233 Gdańsk, Poland}

\date{\today}

\begin{abstract}
We consider the Clauser-Horn (CH) inequality for a
qubit-qutrit system.
We derive the necessary and sufficient conditions for the 
violation 
of the inequality as well as some sufficient conditions. 
Remarkably, we demonstrate the importance of local parameters in 
violation of the inequality. In other words, 
there are some 
families of mixed states violating the inequality for which
the correlation part alone is useless in the Bell-CH test.
\end{abstract}
\maketitle

\section{Introduction}

Quantum entanglement \cite{Schrodinger_1935_steering,cab_EPR1935}
is a surprising feature that most naturally manifests itself 
on the level of quantum pure states since in this case global
disorder exceeds the local one. In the case of mixed states, the
situation is much more complicated \cite{HHHH2009}, however
statistically, the essence of the phenomenon 
remains true to some extent.
The discovery of Bell inequalities \cite{cab_Bell1964,cab_CHSH1969}
shows that correlations represented by quantum
entanglement do not allow for description in  local and realistic
terms (see \cite{BCPSW2014} for the review).
After impressive developments 
\cite{FC1972,AGR1981,AGR1982,ADR1982,WJSWZ1998}
the violation of the inequalities have been finally confirmed
experimentally under the assumption of random inputs of measuring
devices \cite{GVetal2015,HBDetal2015,SMetal2015}.
While the violation is very easy to be checked on the paper for 
any pure state \cite{Gisin1991_entangled-violates-Bell} 
it is not such for general mixed states. 
In the case of the Clauser--Horne--Shimony--Holt (CHSH)
inequality \cite{cab_CHSH1969}
the easy condition has been provided \cite{HHH1995_Bell-violation} 
in terms of singular spectrum of the correlation tensor which 
was recently extended 
\cite{HSSS2025-Three-body-non-locality}
to the tight 3-qubit inequalities (see \cite{LPZB2004}).

It is known that the Clauser--Horne (CH) inequality is equivalent 
to CHSH in terms 
of the theoretical probabilities (although not from the perspective 
of interpretation of the experimental situation, since then, CH
requires fewer assumptions \cite{MGS2024_Stanford-Enc_Phil_Bell-theorem}). 
In this study, we focus on the mathematical part (when CHSH and CH 
are equivalent) in the case when the state is asymmetric, that is, 
it is of the qubit-qutrit type. 
We show that in this case, unlike in the two-qubit one 
the local polarization of the state may be vitally important 
for the violation.

Here, we consider the following situation:
a given two-partite quantum state 
$\rho\in \mathsf{End}({\mathcal{H}}_A\otimes{\mathcal{H}}_B)$
is distributed between Alice and Bob. They perform a certain number 
$M_A$ and $M_B$ of possible projective measurements, 
$A_i$ and $B_j$ ($i=1,\dots,M_A$, $j=1,\dots,M_B$) on their particles.
Each of these measurements has a given number of possible outcomes,
$N_A$ and $N_B$, respectively; these outcomes are denoted by $a_k$ and $b_l$.
We then calculate a set of $M_A M_B N_A N_B$ 
joint probabilities $P(a_k,b_l|A_i,B_j)$.
We are interested in determining whether such probabilities 
can be described using a local hidden variable (LHV) model. 
The set of such probabilities is convex, and its facets are 
determined by Bell inequalities.
The problem of finding relevant tight Bell inequalities for different values of 
$N_A$, $N_B$, $M_A$, $M_B$ was vastly studied in the literature,
see, e.g., \cite{BCPSW2014} and references therein.
In the case $N_A=N_B=M_A=M_B=2$ (two-qubit system and two measurements per site),
already Fine \cite{Fine1982} has shown that probabilities can be reproduced by
a LHV model if and only if they obey Clauser-Horne-Shimony-Holt
(CHSH) ineguality
for all choices of two-dimensional observables $A_1$, $A_2$, $B_1$, $B_2$ 
used by Alice and Bob.
Thus, to decide whether a given two-qubit quantum state $\rho$ can be used to
produce non-local quantum correlations one has to check whether there exist such 
observables $A_1$, $A_2$, $B_1$, $B_2$ for which the CHSH inequality is violated.
The necessary and sufficient condition for this was given in
\cite{HHH1995_Bell-violation}.

Authors of \cite{CG2004}, using linear programming methods, found 
all non-equivalent, tight 
Bell inequalities delimiting the set of lhv probabilities for small values
of $N_A$, $N_B$, $M_A$, $M_B$. 
For  $N_A=N_B=M_A=M_B=2$ they reproduced Fine's results.
In \cite{CG2004} the Bell inequality
in a Clauser-Horne (CH) form was used
\begin{multline}
	I_{CH} = P(A_1 B_1) + P(A_2 B_1) + P(A_1 B_2)\\
	-P(A_2 B_2) - P(A_1)
	- P(B_1) \le 0.
	\label{CH-inequality}
\end{multline}
In the above inequality for simplicity we denoted $P(AB)\equiv P(00|AB)$,
similarly $P(A)$ ($P(B)$) denotes the probability that when Alice (Bob) measures 
$A$ ($B$) obtains 0.
Quantum mechanical values of $I_{CH}$ are limited by
$\tfrac{1}{\sqrt{2}}-\tfrac{1}{2}$.

In the case we are interested in the present paper, i.e. for 
(qubit)$\times$(qutrit) system, two measurements per site
($N_A=2$, $N_B=3$, and $M_A=M_B=2$), it has been shown in \cite{CG2004}
that again the inequalities (\ref{CH-inequality}) delimit the set of lhv correlations.
The CH inequality (\ref{CH-inequality}) is defined for two outcome measurements. 
Thus, to use it on Bob's site (where we have three outcome measurements), 
three outcomes are mapped into two effective ones by grouping two original 
outcomes together.

\section{Characterization of local hidden variable correlations in 
	qubit-qutrit state}

The most general qubit-qutrit state can be written as
\begin{multline}
\rho= \tfrac{1}{6} (I_2 \otimes I_3 + \sum_i r_i \sigma_i\otimes I_3
+ \sum_j R_j I_2\otimes \lambda_j \\
+ \sum_{ij} T_{ij} \sigma_i\otimes\lambda_j),
\label{state-2x3-general}
\end{multline}
where $\sigma_i$, $i=1,2,3$ are Pauli matrices and 
$\lambda_j$, $j=1,\dots,8$ are Gell-Mann matrices.
The explicit form of Gell-Mann matrices and some of their properties
are given in Appendix \ref{sec:App-Gell-Mann}.
Hermicity of Pauli and Gell-Mann matrices implies that $\vec{r}\in{\mathbb{R}}^3$,
$\vec{R}\in{\mathbb{R}}^8$ and $T=[T_{ij}]$ is a real $3\times 8$ matrix.
Moreover, $\vec{r}$, $\vec{R}$ and $T$ have to fulfill additional constrains due to
semi-positive definiteness of the state $\rho$.

The probabilities appearing in the CH inequality (\ref{CH-inequality}) have the form:
\begin{align}
P(AB) & = \tr\{ \rho(P_A\otimes\Pi_B)\},
\label{probabilities-general-AB}\\
P(A) & = \tr\{ \rho(P_A\otimes I_3)\},\\
P(B) & = \tr\{ \rho(I_2\otimes\Pi_B)\},
\label{probabilities-general-B}
\end{align}
where $P_A$ and $\Pi_B$ are projectors acting in ${\mathbb{C}}^2$
and ${\mathbb{C}}^3$, respectively. 
Alice uses projectors $P_A$, $P_A^\prime$, Bob $\Pi_A$ and $\Pi_B^\prime$.
Now, we insert projectors (\ref{probabilities-general-AB}--\ref{probabilities-general-B})
into (\ref{CH-inequality})
and maximize $I_{CH}$ over all possible projectors
$P_A$, $P_A^\prime$, $\Pi_A$, $\Pi_B^\prime$.
Without loss of generality, as projectors $P_A$ and $P_A^\prime$ we can take
\begin{equation}
P_A=\tfrac{1}{2}(I_2+\vec{a}\cdot\vecg{\sigma}),\quad
P_A^\prime=\tfrac{1}{2}(I_2+\vec{a}^\prime \cdot\vecg{\sigma}),
\label{P_A}
\end{equation}
with $\vec{a},\vec{a}^\prime \in{\mathbb{R}}^3$ and maximization over all
projectors $P_A$, $P_A^\prime$ is equivalent to maximization over all unit vectors
$\vec{a},\vec{a}^\prime\in{\mathbb{R}}^3$.

The case of projectors $\Pi_B$ and $\Pi_B^\prime$ is more complicated.
These projectors act in the three dimensional space ${\mathbb{C}}^3$ thus
they can project on one or two dimensional subspace of ${\mathbb{C}}^3$.
In the former case $\tr \Pi_B=1$, in the latter one $\tr \Pi_B=2$.
As we have mentioned before, when we want to adapt three outcome
measurements to CH inequality we have to group three outcomes into
two effective ones: (1 outcome) + (2 outcomes).
$\Pi_B$ with trace 1 corresponds to the case when the result ``zero'' is connected
with one outcome while $\Pi_B$ with trace 2 to the case when the result
``zero'' is connected with a group of two outcomes.
Therefore, we are to consider the following four cases:
(i) $\tr \Pi_B=1$, $\tr \Pi_B^\prime=1$,
(ii) $\tr \Pi_B=2$, $\tr \Pi_B^\prime=2$,
(iii) $\tr \Pi_B=1$, $\tr \Pi_B^\prime=2$,
(iv) $\tr \Pi_B=2$, $\tr \Pi_B^\prime=1$.
Taking this into account and inserting (\ref{state-2x3-general},\ref{P_A})  into
(\ref{probabilities-general-AB}--\ref{probabilities-general-B}) we obtain
\begin{multline}
P(AB)  = \tfrac{1}{6} \big\{ 
(1+\vec{a}\cdot\vec{r}) \tr\Pi_B
+ \tr[(\vec{R}\cdot\vecg{\lambda})\Pi_B] \\
+ \sum_{ij} a_i T_{ij} \tr(\lambda_j \Pi_B)
\big\}
\label{probability_P_AB_intermediate}
\end{multline}
\begin{align}
P(A) & = \tfrac{1}{2}(1+\vec{a}\cdot\vec{r}),\\
P(B) & = \tfrac{1}{3} \big\{
\tr\Pi_B + \tr[(\vec{R}\cdot\vecg{\lambda})\Pi_B]
\big\}.
\end{align}
This form of probabilities leads to (cf. (\ref{CH-inequality}))
\begin{align}
I_{CH} & = 
\tfrac{1}{6} [(\vec{a}+\vec{a}^\prime)\cdot\vec{r}]
\tr\Pi_B 
+
\tfrac{1}{6} [(\vec{a}-\vec{a}^\prime)\cdot\vec{r}]
\tr\Pi_B^\prime \nonumber\\ 
& \phantom{=}
+ \tfrac{1}{6} \sum_{ij} (a_i+a_i^\prime) T_{ij} 
\tr(\lambda_j \Pi_B)\nonumber\\
& \phantom{=}
+ \tfrac{1}{6} \sum_{ij} (a_i-a_i^\prime) T_{ij} 
\tr(\lambda_j \Pi_B^\prime)\nonumber\\
& \phantom{=}
-\tfrac{1}{2} (1+\vec{a}\cdot\vec{r}).
\label{I_CH_intermediate}
\end{align}

It is very interesting, and surprising, that in this case
$I_{CH}$ depends not only on the bipartite correlation tensor
$T_{ij}$ but also on the local vector $\vec{r}$.
This point deserves some further discussion.
As it is well known (see, e.g., \cite{HHH1995_Bell-violation})
the value of the left hand side of the CHSH inequality
\begin{equation}
	I_{CHSH} = E(A_1,B_1) + E(A_2,B_1) + E(A_1, B_2) - E(A_2,B_2),
	\label{ICHSH}
\end{equation}
where the correlation function $E(A_i,B_j)$, $i,j=1,2$, in a
two-qubit state $\rho^{(2\times2)}$ is equal to 
\begin{equation}
	E(A_i,B_j) = \tr\big( \rho^{(2\times2)} A_i\otimes B_j \big),
\end{equation}
depends only on correlation tensor and parameters of observables 
used by Alice and Bob.
In a qubit case the most general two-outcome observable has
the form
\begin{equation}
	A= \vec{a}\cdot\vecg{\sigma}.
	\label{qubit-observable}
\end{equation}
Taking such a form of observables $A_i,B_j$, $i,j=1,2$
with vectors $\vec{a}$, $\vec{a}^\prime$, $\vec{b}$,
$\vec{b}^\prime$, respectively, in a general
two-qubit state
\begin{multline}
	\rho^{(2\times2)} = \tfrac{1}{4} \Big(
	I_2\otimes I_2 + (\vec{r}^A\cdot\vecg{\sigma}) \otimes I_2
	+ I_2\otimes (\vec{r}^B\cdot\vecg{\sigma})\\
	+ \sum_{ij} C_{ij} \sigma_i\otimes \sigma_j
	\Big)
	\label{state-two-qubit}
\end{multline}
one obtains \cite{HHH1995_Bell-violation}:
\begin{equation}
	I_{CHSH}^{(2\times2)} = 
	\sum_{ij} a_i C_{ij} (b_j + b_{j}^{\prime})
	+ \sum_{ij} a_i^\prime C_{ij} (b_j - b_{j}^{\prime}).
\end{equation}

Next, as it was shown in \cite{CH1974_CH-inequality,Cereceda2001},
the CH and CHSH inequalities are equivalent assuming that 
detectors are perfectly efficient and the principle of locality
holds.
Thus, it is legitimate to ask what is the form of $I_{CHSH}$ in
a qubit-qutrit system. In this case to calculate the correlation
function we need to determine the most general form of
observables. For a qubit space we take observables in the
form (\ref{qubit-observable}), i.e., 
$A_1 = \vec{a}\cdot\vecg{\sigma}$,
$A_2 = \vec{a}^\prime\cdot\vecg{\sigma}$.
Observables with outcomes $\pm1$ acting in a qutrit Hilbert space
we take in the following form:
\begin{align}
	B_1 & = (+1)\Pi_B + (-1)(I_3-\Pi_B),\\
	B_2 & = (+1)\Pi_B^\prime + (-1)(I_3-\Pi_B^\prime),
\end{align}
where $\Pi_B$, $\Pi_B^\prime$ are projectors acting in 
a 3-dimensional Hilbert space, thus all of the remarks made 
in the paragraph above Eq.~(\ref{I_CH_intermediate}) apply
to them.
With such a choice one obtains 
\begin{multline}
	I_{CHSH}^{(2\times3)} = \tfrac{2}{3} \Big\{
	[(\vec{a}+\vec{a}^\prime)\cdot\vec{r}] \tr\Pi_{B}
	+ [(\vec{a}-\vec{a}^\prime)\cdot\vec{r}] \tr\Pi_{B}^\prime\\
	+\sum_{ij} (a_i+a_i^\prime) T_{ij} \tr(\Pi_{B}\lambda_j)
	+\sum_{ij} (a_i-a_i^\prime) T_{ij} \tr(\Pi_{B}^\prime\lambda_j)\\
	-3(\vec{a}\cdot\vec{r})
	\Big\}
\end{multline}
and it is easily seen that $I_{CHSH}^{(2\times3)}\le 2$
is equivalent to $I_{CH}\le0$.
Thus we see that the dependence on the local vector $\vec{r}$
is an inherent property of $2\times3$ Bell nonlocality.

To proceed further we have to decide how to represent the most general
projectors $\Pi_B$, $\Pi_B^\prime$ in both cases, i.e. when their traces are
equal 1 or 2.
First, let us notice that when $\Pi$ is a projector 
in ${\mathbb{C}}^3$
with a definite trace then any projector
with the same trace can be obtained via the similarity transformation:
$\Pi\mapsto U\Pi U^\dagger$, $U\in SU(3)$.
Next, if $\tr\Pi=1$ then $I_3-\Pi$ is also a projector and
$\tr(I_3-\Pi)=2$.
Thus, it is enough to chose one projector with trace 1,
say $\tilde{\Pi}$, and then the most general projector $\Pi_B$
can be written as $U \tilde{\Pi} U^\dagger$ or $I_3- U \tilde{\Pi} U^\dagger$
with $U\in SU(3)$.

CH inequalities are historically connected with spin measurements, therefore 
we consider an observable corresponding to spin-1 projection on a given direction
$\vec{b}$, $\vec{b}^2=1$, i.e. the observable $\vec{b}\cdot\vec{S}$, 
where $S_i$ are standard spin-1 matrices (compare, e.g., \cite{cab_Ballentine2014}):
\begin{gather}
	\label{spin_1}
	S_1=\tfrac{1}{\sqrt{2}}
	\begin{pmatrix}
		0 & 1 & 0 \\
		1 & 0 & 1 \\
		0 & 1 & 0 \\
	\end{pmatrix},\quad
	S_2=\tfrac{i}{\sqrt{2}}
	\begin{pmatrix}
		0 & -1 & 0 \\
		1 & 0 & -1 \\
		0 & 1 & 0 \\
	\end{pmatrix},\\
	S_3=
	\begin{pmatrix}
		1 & 0 & 0 \\
		0 & 0 & 0 \\
		0 & 0 & -1 \\
	\end{pmatrix}.
\end{gather}
Spectral decomposition of $\vec{b}\cdot\vec{S}$ has the form
\begin{equation}
\vec{b}\cdot\vec{S} = (-1)\cdot \Pi_-^{\vec{b}} + 
0\cdot \Pi_0^\vec{b} + (+1)\cdot \Pi_+^{\vec{b}},
\end{equation}
with
\begin{equation}
\Pi_0^\vec{b} = I_3-(\vec{b}\cdot \vec{S})^2,\quad
\Pi_\pm^\vec{b} = \tfrac{1}{2}[(\vec{b}\cdot\vec{S})^2
\pm(\vec{b}\cdot\vec{S})].
\end{equation}
It can can be easily seen that $\tr(\vec{b}\cdot \vec{S})^2=2$.
Therefore, the most general Bob's projector with trace 2 can be written as
\begin{equation}
\Pi_B^U = U (\vec{b}\cdot \vec{S})^2 U^\dagger,
\quad U\in SU(3),
\end{equation}
while the most general projector with trace 1 as
\begin{equation}
\Pi_B^U = I_3 - U (\vec{b}\cdot \vec{S})^2 U^\dagger,
\quad U\in SU(3),
\end{equation}
where $\vec{b}$ is an arbitrary unit vector.
Inserting the above form of Bob's projectors into Eq.~(\ref{I_CH_intermediate})
we can obtain form of  $I_{CH}$ 
in the four cases (i)--(iv) enumerated above Eq.~(\ref{probability_P_AB_intermediate}).
However, taking into account that we are interested in maximizing 
$I_{CH}$ over all unit vectors 
$\vec{a},\vec{a}^\prime\in{\mathbb{R}}^3$ we can see that with simple
operations ($\vec{a}\leftrightarrow\vec{a}^\prime$, $\vec{a}\to-\vec{a}$,
$\vec{a}^\prime\to-\vec{a}^\prime$) all those four forms of $I_{CH}$
reduce to one:
\begin{multline}
I_{CH} = - \tfrac{1}{6}(3+\vec{a}\cdot\vec{r})
- \tfrac{1}{6}  \sum_{ij} (a_i + a_i^\prime) T_{ij} \tr(\lambda_j \Pi_{B}^{U})\\
- \tfrac{1}{6}  \sum_{ij} (a_i - a_i^\prime) T_{ij} \tr(\lambda_j \Pi_{B}^{U^\prime}),
\label{I_CH_intermediate-2}
\end{multline}
where $U,U^\prime\in SU(3)$.
Now we can apply the adjoint representation 
of the group $SU(3)$, $\gamma$, acting in ${\mathbb{R}}^8$
(for broader discussion see Appendix \ref{sec:App-Gell-Mann}):
\begin{equation}
	SU(3) \ni U \mapsto \gamma(U)\in SO(8), \quad
	U \lambda_i U^\dagger = \sum_j \gamma(U)_{ij} \lambda_j.
\end{equation}
Matrices $\gamma(U)$ form a proper subgroup of the whole orthogonal 
group $SO(8)$.
Thus
\begin{align}
\tr(\lambda_i \Pi_{B}^{U}) & = \tr(U^\dagger \lambda_i U (\vec{b}\cdot\vec{S})^2)
\nonumber \\
& = \sum_j \gamma_{ij}(U^\dagger) \tr(\lambda_j (\vec{b}\cdot\vec{S})^2).
\end{align}
It is convenient to introduce a vector $\tilde{\vecg{\beta}}\in{\mathbb{R}}^8$
with the following components
\begin{equation}
\tilde{\beta}_i = - \tfrac{\sqrt{3}}{2} \tr\big[\lambda_i (\vec{b}\cdot\vec{S})^2\big].
\end{equation}
Explicitly
\begin{multline}
\tilde{\vecg{\beta}} = \big(
-\tfrac{\sqrt{3}}{\sqrt{2}}b_1 b_3, -\tfrac{\sqrt{3}}{\sqrt{2}}b_2 b_3,
\tfrac{\sqrt{3}}{4}(1-3b_3^2), \tfrac{\sqrt{3}}{2}(b_2^2-b_1^2),\\
-\sqrt{3} b_1 b_2, \tfrac{\sqrt{3}}{\sqrt{2}}b_1 b_3,
\tfrac{\sqrt{3}}{\sqrt{2}}b_2 b_3, \tfrac{1}{4}(3b_3^2-1)
\big).
\end{multline}
It can be checked that
\begin{equation}
\tilde{\vecg{\beta}}^2=1,\quad
\tilde{\vecg{\beta}} \star \tilde{\vecg{\beta}} = \tilde{\vecg{\beta}},
\label{beta-properties}
\end{equation}
where the symmetric star ``$\star$'' product is defined in Eq.~(\ref{star-product-def}).
Notice that the set of vectors fulfilling conditions (\ref{beta-properties})
is a proper subset of the seven dimensional sphere $S^7\subset{\mathbb{R}}^8$.
Let us denote this set as
\begin{equation}
\mathcal{S} = \{
\vecg{\alpha} \in{\mathbb{R}}^8\colon \vecg{\alpha}^2=1,\quad
\vecg{\alpha}\star\vecg{\alpha} = \vecg{\alpha}
\}.
\label{set_S}
\end{equation}
This region $\mathcal{S}$ is invariant under the action of the matrices $\gamma(U)$ 
(c.f. Eq.~(\ref{star-product-transf})) but not under general $SO(8)$ rotations.
The action of the adjoint representation $\gamma$ is transitive on the set 
$\mathcal{S}$ \cite{AMM1997_qutrit}.

With this notation we have
\begin{equation}
\tr(\lambda_i \Pi_{B}^{U}) = - \tfrac{2}{\sqrt{3}}
\sum_j \gamma_{ij}(U^\dagger) \tilde{\beta}_j,
\end{equation}
and analogously for $\tr(\lambda_i \Pi_{B}^{U^\prime})$.
Therefore, $I_{CH}$ given in Eq.~(\ref{I_CH_intermediate-2})
can be written as
\begin{multline}
I_{CH} = -\tfrac{1}{6}(3+\vec{a}\cdot\vec{r})
+ \tfrac{1}{3\sqrt{3}} 
(\vec{a}+\vec{a}^\prime)^T T \vecg{\beta}\\
+ \tfrac{1}{3\sqrt{3}}
(\vec{a}-\vec{a}^\prime)^T T \vecg{\beta}^\prime,
\label{I_CH_intermediate-3}
\end{multline}
$U,U^\prime\in SU(3)$ and we denoted $\vecg{\beta}=\gamma(U^\dagger) \tilde{\vecg{\beta}}$, 
$\vecg{\beta}^\prime=\gamma(U^{\prime\dagger}) \tilde{\vecg{\beta}}$.

Thus, finally, we can formulate the following theorem:
\begin{theorem}
A qubit-qutrit state (\ref{state-2x3-general}) 
generates only lhv correlations
if and only if the inequality:
\begin{equation}
- \tfrac{\sqrt{3}}{2} \vec{a}\cdot\vec{r} 
+ (\vec{a}+\vec{a}^\prime)^T T \vecg{\beta}
+ (\vec{a}-\vec{a}^\prime)^T T \vecg{\beta}^\prime
\le \tfrac{3\sqrt{3}}{2}
\label{CH-inequality-r-T-beta}
\end{equation}
holds for all unit vectors $\vec{a},\vec{a}^\prime$ from ${\mathbb{R}}^3$
and all vectors $\vecg{\beta},\vecg{\beta}^\prime$
from the set  $\mathcal{S}$ defined in Eq.~(\ref{set_S}).
\label{theorem-1}
\end{theorem}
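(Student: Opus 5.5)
The plan is to chain together three facts: the Collins--Gisin characterization of the local polytope, the reduction of $I_{CH}$ to Eq.~(\ref{I_CH_intermediate-3}), and the transitivity of $\gamma$ on $\mathcal{S}$.

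First, we use the result of \cite{CG2004} quoted in the Introduction. For $N_A=2$, $N_B=3$, $M_A=M_B=2$, the set of lhv probabilities is cut out exactly by the CH inequalities (\ref{CH-inequality}), where on Bob's side the three outcomes are coarse-grained into two. Hence $\rho$ generates only lhv correlations iff $I_{CH}\le 0$ for every admissible choice of $P_A,P_A^\prime,\Pi_B,\Pi_B^\prime$, and for every relabelling of outcomes. We will note that relabelling Alice's outcomes, or swapping which of Bob's groupings is called ``0'', just replaces $P_A$ by $I_2-P_A$ or $\Pi_B$ by $I_3-\Pi_B$. These replacements stay inside the families we maximise over, so the single form (\ref{CH-inequality}) suffices.

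Second, we parametrise the measurements exhaustively. Alice's rank-one projectors are exactly $\tfrac12(I_2+\vec a\cdot\vecg\sigma)$ with $|\vec a|=1$. Bob's rank-1 and rank-2 projectors are exactly $I_3-U(\vec b\cdot\vec S)^2U^\dagger$ and $U(\vec b\cdot\vec S)^2U^\dagger$ with $U\in SU(3)$, because $SU(3)$ acts transitively on projectors of fixed rank. We can also dispose of the trivial projectors $0$ and $I_3$: for those, $I_{CH}\le0$ automatically, since with a trivial $\Pi$ the expression reduces to a combination bounded by classical values. Next we invoke the reduction already carried out before Eq.~(\ref{I_CH_intermediate-2}): the four trace cases (i)--(iv) map into one another under $\vec a\leftrightarrow\vec a^\prime$ and $\vec a\to-\vec a$. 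These maps are bijections of the parameter domain (unit vectors, all $U,U^\prime$), so the suprema coincide. It follows that "$I_{CH}\le0$ for all settings" is equivalent to Eq.~(\ref{I_CH_intermediate-3}) being $\le 0$ for all unit $\vec a,\vec a^\prime$ and all $U,U^\prime$. Multiplying by the positive factor $3\sqrt3$ turns this condition into exactly (\ref{CH-inequality-r-T-beta}).

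Third, and this is the step needing most care, we must show that the set $\{\gamma(U^\dagger)\tilde{\vecg\beta}\colon U\in SU(3)\}$ is all of $\mathcal{S}$. One inclusion follows from (\ref{beta-properties}) and the covariance of the star product, Eq.~(\ref{star-product-transf}): $\gamma(U)$ is orthogonal and preserves $\star$, so the orbit of $\tilde{\vecg\beta}$ lies in $\mathcal{S}$. For the reverse inclusion we use transitivity of $\gamma$ on $\mathcal{S}$ \cite{AMM1997_qutrit}. Any $\vecg\alpha\in\mathcal{S}$ equals $\gamma(V)\tilde{\vecg\beta}$ for some $V$, and we take $U=V^\dagger$. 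Alternatively, we can verify directly that $\vecg\alpha\in\mathcal{S}$ iff $\tfrac13 I_3-\tfrac{1}{\sqrt3}\vecg\alpha\cdot\vecg\lambda$ is a rank-one projector: expanding its square with the Gell-Mann product rule gives exactly the conditions $\vecg\alpha^2=1$, $\vecg\alpha\star\vecg\alpha=\vecg\alpha$, up to sign conventions that we will fix against Appendix~\ref{sec:App-Gell-Mann}. The main obstacle is matching these normalisation and sign conventions, i.e.\ checking that $\tilde{\vecg\beta}$ indeed corresponds to the projector $I_3-(\vec b\cdot\vec S)^2$ with the stated factor $-\tfrac{\sqrt3}{2}$. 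Once this is done, ranging over $\vecg\beta,\vecg\beta^\prime\in\mathcal{S}$ is the same as ranging over $U,U^\prime$, and both directions of the equivalence in Theorem~\ref{theorem-1} follow.
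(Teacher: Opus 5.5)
Your proposal is correct and follows the paper's own derivation: the Collins--Gisin characterization, the parametrization of Bob's projectors as $U(\vec{b}\cdot\vec{S})^2U^\dagger$ or $I_3-U(\vec{b}\cdot\vec{S})^2U^\dagger$, the reduction of the four trace cases to a single form, and the identification of the orbit $\{\gamma(U^\dagger)\tilde{\vecg{\beta}}\}$ with $\mathcal{S}$ by transitivity; your remarks on outcome relabellings, trivial projectors and bijectivity of the reductions are slightly more careful than the paper's. One slip in your alternative check: the rank-one projector attached to $\vecg{\alpha}$ is $\tfrac13 I_3+\tfrac{1}{\sqrt3}\vecg{\alpha}\cdot\vecg{\lambda}$ (for $\vecg{\alpha}=\tilde{\vecg{\beta}}$ this is exactly $I_3-(\vec{b}\cdot\vec{S})^2$), whereas with your minus sign the purity condition would come out as $\vecg{\alpha}\star\vecg{\alpha}=-\vecg{\alpha}$.
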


Let us denote the left-hand side of inequality (\ref{CH-inequality-r-T-beta})
by
\begin{multline}
E(\vec{r},T,\vec{a},\vec{a}^\prime,\vecg{\beta},\vecg{\beta}^\prime)\\
=
- \tfrac{\sqrt{3}}{2} \vec{a}\cdot\vec{r} 
+ (\vec{a}+\vec{a}^\prime)^T T \vecg{\beta}
+ (\vec{a}-\vec{a}^\prime)^T T \vecg{\beta}^\prime.
\end{multline}
Quantum mechanical values of $I_{CH}$ are limited by 
$\tfrac{1}{\sqrt{2}} - \tfrac{1}{2}$,
thus, quantum values of 
$E(\vec{r},T,\vec{a},\vec{a}^\prime,\vecg{\beta},\vecg{\beta}^\prime)$
are limited by $\tfrac{3\sqrt{3}}{\sqrt{2}}$.

\section{Upper bound on maximal value of
$E(\vec{r},T,\vec{a},\vec{a}^\prime,\vecg{\beta},\vecg{\beta}^\prime)$}

The following lemma, proven in Appendix \ref{sec:App-lemma}, will be useful in further 
analysis:
\begin{lemma}\label{lemma1}
Let $T$ be a real, rectangular $m\times n$ matrix. For arbitrary vectors
$\vec{x},\vec{z}\in{\mathbb{R}}^m$, $\vec{y},\vec{u}\in{\mathbb{R}}^n$
such that $\vec{y}\perp\vec{u}$
it holds
\begin{equation}
	\vec{x}^T T \vec{y} + \vec{z}^T T \vec{u}
	\le \mu_{1} ||\vec{x}||\, ||\vec{y}|| + \mu_{2} ||\vec{z}||\, ||\vec{u}|| ,
\end{equation}
where $\mu_{1},\mu_2$ are two largest singular values of the matrix $T$.
The equality holds when $\vec{x}$, $\vec{y}$ and $\vec{z}$, $\vec{u}$
are singular vectors of $T$
corresponding to the singular values $\mu_1$ and $\mu_2$, respectively.
\end{lemma}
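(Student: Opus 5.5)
The plan is to reduce to the case where $\vec{y},\vec{u}$ are orthonormal and then apply a Ky Fan type bound to the Gram matrix $T^TT$. Before that, one caveat. As stated, with the labels $\mu_1$ and $\mu_2$ fixed to the two products, the inequality can fail. Take $T=\mathrm{diag}(2,1)$, $\vec{x}=0$, $\vec{y}=\vec{e}_2$, $\vec{z}=\vec{u}=\vec{e}_1$. The left-hand side is $2$, but the right-hand side is $\mu_2=1$. So I would prove the lemma under the ordering $\|\vec{x}\|\,\|\vec{y}\|\ge \|\vec{z}\|\,\|\vec{u}\|$. Equivalently, I would prove the symmetric form in which $\mu_1$ multiplies the larger of the two products and $\mu_2$ the smaller. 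This is the form actually needed later, since $\vec{a}\pm\vec{a}'$ can be swapped.

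\emph{Step 1 (normalization).} The inequality is homogeneous in each vector, and it is trivial if any vector vanishes. So I assume $\vec{y},\vec{u}$ are orthonormal and set $\alpha=\|\vec{x}\|\ge\beta=\|\vec{z}\|$. By Cauchy--Schwarz,
\begin{equation*}
\vec{x}^TT\vec{y}+\vec{z}^TT\vec{u}\le \alpha\sqrt{p}+\beta\sqrt{q},
\qquad p=\vec{y}^TT^TT\vec{y},\quad q=\vec{u}^TT^TT\vec{u}.
\end{equation*}

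\emph{Step 2 (spectral constraints).} The eigenvalues of $T^TT$ are $\mu_1^2\ge\mu_2^2\ge\dots$. By the Rayleigh quotient, $\max(p,q)\le\mu_1^2$. Since $\vec{y},\vec{u}$ are orthonormal, Ky Fan's maximum principle gives $p+q=\tr(P\,T^TT)\le\mu_1^2+\mu_2^2$, where $P$ is the projector onto $\mathrm{span}\{\vec{y},\vec{u}\}$. Hence $(p,q)$, sorted decreasingly, is weakly submajorized by $(\mu_1^2,\mu_2^2)$.

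\emph{Step 3 (the key step).} The function $t\mapsto\sqrt{t}$ is increasing and concave. Tomić--Weyl therefore transfers the weak majorization: with $s_1\ge s_2$ the sorted values of $\sqrt{p},\sqrt{q}$, we get $s_1\le\mu_1$ and $s_1+s_2\le\mu_1+\mu_2$. A direct argument also works. If $p\le\mu_2^2$ and $q\le\mu_2^2$, the bound is trivial. Otherwise, with $p\ge q$ say, $\sqrt{q}\le\sqrt{\mu_1^2+\mu_2^2-p}$, and $\sqrt{p}+\sqrt{\mu_1^2+\mu_2^2-p}$ is decreasing for $p\ge(\mu_1^2+\mu_2^2)/2$, so it is maximized at $p=\mu_1^2$. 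Then, since $\alpha\ge\beta$,
\begin{equation*}
\alpha\sqrt{p}+\beta\sqrt{q}\le \alpha s_1+\beta s_2=\beta(s_1+s_2)+(\alpha-\beta)s_1\le\beta(\mu_1+\mu_2)+(\alpha-\beta)\mu_1=\alpha\mu_1+\beta\mu_2 .
\end{equation*}
The first inequality is the rearrangement inequality. Undoing the normalization gives the claim.

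\emph{Step 4 (equality).} If $\vec{x},\vec{y}$ and $\vec{z},\vec{u}$ are the left/right singular pairs for $\mu_1$ and $\mu_2$, then $\vec{y}\perp\vec{u}$ automatically. Each term then equals $\mu_k\|\cdot\|\,\|\cdot\|$, so equality holds.

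I expect Step 3 to be the main obstacle. It is the passage from the Ky Fan bound on $p+q$ to a bound on $\sqrt{p}+\sqrt{q}$ together with the weights, and it is exactly where the ordering $\alpha\ge\beta$ is indispensable.
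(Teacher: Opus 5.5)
\textbf{Step~3 fails, and the ordered version you set out to prove is itself false.} Your counterexample to the lemma as literally stated is correct. It defeats exactly the unjustified step in the paper's own proof, which drops the $i=1$ term of $\sum_i\mu_i z_iu_i$ ``due to'' $\vec{y}\perp\vec{u}$; orthogonality does not force $u_1=0$.

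The Step~2 constraints $\max(p,q)\le\mu_1^2$ and $p+q\le\mu_1^2+\mu_2^2$ do not imply $s_1+s_2\le\mu_1+\mu_2$. The point $p=q=(\mu_1^2+\mu_2^2)/2$ satisfies both constraints. It gives $s_1+s_2=\sqrt{2(\mu_1^2+\mu_2^2)}>\mu_1+\mu_2$ whenever $\mu_1\neq\mu_2$.
\begin{itemize}
\item Tomi\'c--Weyl carries weak submajorization through increasing \emph{convex} functions. For the concave $\sqrt{\cdot}$ it only carries weak supermajorization, so it gives you nothing here.
\item Your direct argument has the matching hole. From $p\ge q$ and $p>\mu_2^2$ you cannot conclude $p\ge(\mu_1^2+\mu_2^2)/2$. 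On $(\mu_2^2,\mu_1^2]$ the function $\sqrt{p}+\sqrt{\mu_1^2+\mu_2^2-p}$ peaks at $p=(\mu_1^2+\mu_2^2)/2$, not at $p=\mu_1^2$.
\end{itemize}

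That peak is actually attained. Take $T=\mathrm{diag}(1,0)$, $\vec{y},\vec{u}=(\vec{e}_1\pm\vec{e}_2)/\sqrt{2}$ and $\vec{x}=\vec{z}=\vec{e}_1$. The left-hand side is $\sqrt{2}$, while $\mu_1\|\vec{x}\|\,\|\vec{y}\|+\mu_2\|\vec{z}\|\,\|\vec{u}\|=1$. The two products are equal, so no ordering hypothesis can rescue the inequality. In your normalization with $\alpha\ge\beta$, the weighted bound $\alpha\mu_1+\beta\mu_2$ holds for all orthonormal $\vec{y},\vec{u}$ exactly when $\alpha\mu_2\ge\beta\mu_1$.

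\textbf{What your Steps~1--2 do prove.} One more Cauchy--Schwarz gives $\alpha\sqrt{p}+\beta\sqrt{q}\le\sqrt{\alpha^2+\beta^2}\,\sqrt{p+q}\le\sqrt{\alpha^2+\beta^2}\,\sqrt{\mu_1^2+\mu_2^2}$. This bound is attained when $\alpha\mu_2\le\beta\mu_1$.

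This unweighted bound is all that is needed later, provided the orthogonality sits on the right vectors. The paper applies the lemma with $\vec{y}=\vecg{\beta}$, $\vec{u}=\vecg{\beta}'$, but two elements of $\mathcal{S}$ need not be orthogonal. By contrast, $\vec{a}+\vec{a}'\perp\vec{a}-\vec{a}'$ always holds. So apply your bound to $T^T$ with $\vec{y}=\vec{a}+\vec{a}'$, $\vec{u}=\vec{a}-\vec{a}'$ and unit $\vecg{\beta},\vecg{\beta}'$. Using $|\vec{a}+\vec{a}'|^2+|\vec{a}-\vec{a}'|^2=4$, this gives $(\vec{a}+\vec{a}')^TT\vecg{\beta}+(\vec{a}-\vec{a}')^TT\vecg{\beta}'\le2\sqrt{\mu_1^2+\mu_2^2}$. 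Adding $-\tfrac{\sqrt{3}}{2}\vec{a}\cdot\vec{r}\le\tfrac{\sqrt{3}}{2}|\vec{r}|$ yields the bound of Theorem~2 without any weighted lemma.
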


Taking into account that $\vecg{\beta},\vecg{\beta}^\prime$ are unit vectors,
with the help of the above lemma we have
\begin{multline}
E(\vec{r},T,\vec{a},\vec{a}^\prime,\vecg{\beta},\vecg{\beta}^\prime)\\
\le -\tfrac{\sqrt{3}}{2} \vec{a}\cdot\vec{r} 
+ \mu_1 |\vec{a}+\vec{a}^\prime| +
\mu_2 |\vec{a}-\vec{a}^\prime|.
\label{E-bound-eq-1}
\end{multline}
In Appendix \ref{sec:App-maximization} we have shown that
\begin{multline}
\max_{\vec{a},\vec{a}^\prime} 
\big\{ 
-\tfrac{\sqrt{3}}{2} \vec{a}\cdot\vec{r} 
+ \mu_1 |\vec{a}+\vec{a}^\prime| +
\mu_2 |\vec{a}-\vec{a}^\prime|
\big\}\\
=
\tfrac{\sqrt{3}}{2}|\vec{r}| + 2 \sqrt{\mu_1^2 +\mu_2^2}.
\end{multline}

Therefore, we have proven the following theorem:
\begin{theorem}
Let a qubit-qutrit state be parametrized like in 
Eq.~(\ref{state-2x3-general}).
Then
\begin{multline}
\max_{\vec{a}, \vec{a}^\prime,\vecg{\beta},\vecg{\beta}^\prime}
\big\{- \tfrac{\sqrt{3}}{2} \vec{a}\cdot\vec{r} 
+ (\vec{a}+\vec{a}^\prime)^T T \vecg{\beta}
+ (\vec{a}-\vec{a}^\prime)^T T \vecg{\beta}^\prime
\big\}\\
\le
\tfrac{\sqrt{3}}{2}|\vec{r}|
+ 2 \sqrt{\mu_1^2 +\mu_2^2},
\label{I_CH-bound}
\end{multline}
where we maximize over
all unit vectors $\vec{a},\vec{a}^\prime$ from ${\mathbb{R}}^3$
and all vectors $\vecg{\beta},\vecg{\beta}^\prime$
from the set  $\mathcal{S}$ defined in Eq.~(\ref{set_S}),
$\mu_1$ and $\mu_2$ denote the two largest singular values of the matrix $T$.
\label{theorem-2}
\end{theorem}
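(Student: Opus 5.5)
The plan is to bound the expression inside the maximum pointwise, for every admissible choice of $\vec{a},\vec{a}^\prime,\vecg{\beta},\vecg{\beta}^\prime$, by $\tfrac{\sqrt{3}}{2}|\vec{r}| + 2\sqrt{\mu_1^2+\mu_2^2}$. Taking the supremum then gives (\ref{I_CH-bound}) immediately. The argument has two stages: first remove $\vecg{\beta},\vecg{\beta}^\prime$ using Lemma~\ref{lemma1}, then bound the remaining function of $\vec{a},\vec{a}^\prime$ elementarily.

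\emph{Stage 1: eliminating $\vecg{\beta},\vecg{\beta}^\prime$.} Lemma~\ref{lemma1} requires orthogonality of the vectors on one side. The vectors $\vecg{\beta},\vecg{\beta}^\prime\in\mathcal{S}$ are in general not orthogonal. However, since $\vec{a}$ and $\vec{a}^\prime$ are unit vectors,
\begin{equation*}
(\vec{a}+\vec{a}^\prime)\cdot(\vec{a}-\vec{a}^\prime)=|\vec{a}|^2-|\vec{a}^\prime|^2=0 .
\end{equation*}
I will therefore rewrite the correlation terms as
\begin{equation*}
(\vec{a}+\vec{a}^\prime)^T T\vecg{\beta}=\vecg{\beta}^T T^T(\vec{a}+\vec{a}^\prime),
\end{equation*}
and similarly for the primed term. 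I then apply Lemma~\ref{lemma1} to the $8\times 3$ matrix $T^T$ with $\vec{x}=\vecg{\beta}$, $\vec{z}=\vecg{\beta}^\prime$, $\vec{y}=\vec{a}+\vec{a}^\prime$ and $\vec{u}=\vec{a}-\vec{a}^\prime$. Two facts make this work: $T^T$ has the same nonzero singular values as $T$, so its two largest are again $\mu_1,\mu_2$; and $|\vecg{\beta}|=|\vecg{\beta}^\prime|=1$ by (\ref{set_S}). This yields (\ref{E-bound-eq-1}). The degenerate cases $\vec{a}^\prime=\pm\vec{a}$, where one of $\vec{y},\vec{u}$ vanishes, are covered, since the lemma allows arbitrary vectors.

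\emph{Stage 2: the maximization over $\vec{a},\vec{a}^\prime$.} First, the local term satisfies $-\tfrac{\sqrt{3}}{2}\vec{a}\cdot\vec{r}\le\tfrac{\sqrt{3}}{2}|\vec{r}|$ by Cauchy--Schwarz. Next, write $\vec{a}\cdot\vec{a}^\prime=\cos\theta$ with $\theta\in[0,\pi]$. Then
\begin{equation*}
|\vec{a}+\vec{a}^\prime|=2\cos(\theta/2),\qquad |\vec{a}-\vec{a}^\prime|=2\sin(\theta/2).
\end{equation*}
Applying Cauchy--Schwarz in $\mathbb{R}^2$ gives
\begin{equation*}
2\mu_1\cos(\theta/2)+2\mu_2\sin(\theta/2)\le 2\sqrt{\mu_1^2+\mu_2^2}.
\end{equation*}
Adding these two bounds gives the claim. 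Only the inequality direction is needed for the theorem, so I need not check that both bounds are attained simultaneously. Attainability is plausible in any case: choose $\vec{a}=-\vec{r}/|\vec{r}|$, and choose $\vec{a}^\prime$ so that $\vec{a}+\vec{a}^\prime$ and $\vec{a}-\vec{a}^\prime$ lie in the plane of the two leading right singular vectors. This is how the appendix identity is stated, but it is not required here.

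\emph{Main obstacle.} The only delicate point is the orthogonality hypothesis of Lemma~\ref{lemma1}. Applied naively to $T$ with $\vec{y}=\vecg{\beta}$ and $\vec{u}=\vecg{\beta}^\prime$, it would be invalid. The fix is to transpose and place the orthogonal pair $\vec{a}\pm\vec{a}^\prime$ on the side where orthogonality is required, as in Stage 1. Everything else is routine.
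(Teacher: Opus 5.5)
Your Stage~1 fails, and not for the reason you flag. You are right that $\vecg{\beta},\vecg{\beta}^\prime$ cannot go into the slots of Lemma~\ref{lemma1} that require orthogonality. Moving $\vec{a}\pm\vec{a}^\prime$ there does not help, though, because Lemma~\ref{lemma1} is false as stated. Orthogonality of $\vec{y},\vec{u}$ does nothing to keep the term weighted by $\mu_2$ away from the top singular direction.

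The degenerate case you declare ``covered'' shows this. With $\vec{y}=0$ the hypothesis holds trivially, and the lemma would give $\vec{z}^T T\vec{u}\le\mu_2\|\vec{z}\|\,\|\vec{u}\|$ for all $\vec{z},\vec{u}$. That is false for the leading singular pair whenever $\mu_1>\mu_2$. In the appendix proof, the unjustified step is starting the second sum at $i=2$: $\vec{y}\perp\vec{u}$ forces $u_1=0$ only if $\vec{y}\parallel\vec{v}_1$.

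Here is a concrete failure of (\ref{E-bound-eq-1}). Take $\vec{r}=0$, $\vec{R}=0$ and $T=\epsilon\,\vec{e}_1\vecg{\beta}_0^T$, with $\vec{e}_1=(1,0,0)$, $\vecg{\beta}_0\in\mathcal{S}$ and $0<\epsilon\le\sqrt{3}/2$ (so that $\rho\ge0$). Then $\mu_1=\epsilon$ and $\mu_2=0$. For $\vec{a}=-\vec{a}^\prime=\vec{e}_1$ and $\vecg{\beta}^\prime=\vecg{\beta}_0$ one gets $E=2\epsilon$, while the right-hand side of (\ref{E-bound-eq-1}) is $0$.

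The paper's own derivation passes through the same inequality and has the same defect. Your Stage~2 then recovers the correct number only because $\max_\phi(\mu_1\cos\phi+\mu_2\sin\phi)$ does not depend on which singular value multiplies which term. That coincidence is not a proof.

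The theorem itself is true. The repair is the Horodecki-type argument, in which orthogonality on the $\mathbb{R}^3$ side is used only at the very end. Write $\vec{a}+\vec{a}^\prime=2\cos\phi\,\vec{c}$ and $\vec{a}-\vec{a}^\prime=2\sin\phi\,\vec{c}^\prime$, with $\phi\in[0,\pi/2]$ and $\vec{c},\vec{c}^\prime$ orthonormal. If one of $\vec{a}\pm\vec{a}^\prime$ vanishes, choose the corresponding unit vector orthogonal to the other. Cauchy--Schwarz in $\mathbb{R}^8$ with $|\vecg{\beta}|=|\vecg{\beta}^\prime|=1$, followed by Cauchy--Schwarz in $\mathbb{R}^2$, gives
\begin{equation*}
(\vec{a}+\vec{a}^\prime)^T T\vecg{\beta}+(\vec{a}-\vec{a}^\prime)^T T\vecg{\beta}^\prime
\le 2\cos\phi\,|T^T\vec{c}|+2\sin\phi\,|T^T\vec{c}^\prime|
\le 2\sqrt{|T^T\vec{c}|^2+|T^T\vec{c}^\prime|^2}.
\end{equation*}
Complete $\vec{c},\vec{c}^\prime$ by $\vec{c}^{\prime\prime}$ to an orthonormal basis of $\mathbb{R}^3$, and use that $TT^T$ has eigenvalues $\mu_1^2\ge\mu_2^2\ge\mu_3^2$. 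Then
$|T^T\vec{c}|^2+|T^T\vec{c}^\prime|^2=\tr(TT^T)-(\vec{c}^{\prime\prime})^T TT^T\vec{c}^{\prime\prime}\le\mu_1^2+\mu_2^2$.
Adding your bound $-\tfrac{\sqrt{3}}{2}\vec{a}\cdot\vec{r}\le\tfrac{\sqrt{3}}{2}|\vec{r}|$ gives the claimed inequality for every admissible $\vec{a},\vec{a}^\prime,\vecg{\beta},\vecg{\beta}^\prime$, which is all the theorem needs.
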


One point should be stressed here. The Lemma \ref{lemma1} implies 
that the equality in (\ref{E-bound-eq-1}) and (\ref{I_CH-bound}) 
can be attained iff vectors $\vec{a}+\vec{a}^\prime$,
$\vec{a}-\vec{a}^\prime$ and $\vecg{\beta}$, $\vecg{\beta}^\prime$
are left and right singular vectors corresponding to the
largest singular values $\mu_1$ and $\mu_2$ of $T$.
But a vector $\vecg{\alpha}\in{\mathbb{R}}^8$ corresponds to
a projector in ${\mathbb{C}}^3$ iff $\vecg{\alpha}$ belongs to 
the set $\mathcal{S}$ defined in Eq.~(\ref{set_S}).
Therefore, for a given state $\rho$ (and consequently $T$) it
is not quaranteed that 
$E(\vec{r},T,\vec{a},\vec{a}^\prime,\vecg{\beta},\vecg{\beta}^\prime)$ can attain the value
$\tfrac{\sqrt{3}}{2}|\vec{r}| + 2 \sqrt{\mu_1^2 +\mu_2^2}$.
This contrasts with the two-qubit case.

Thus,  from theorems \ref{theorem-1} and \ref{theorem-2} we obtain the following
conclusion:
\begin{theorem}\label{theorem-3}
If for a qubit-qutrit state $\rho$ parametrized like in 
Eq.~(\ref{state-2x3-general}) holds
\begin{equation}
\tfrac{\sqrt{3}}{2}|\vec{r}|
+ 2 \sqrt{\mu_1^2 +\mu_2^2}
\le \frac{3\sqrt{3}}{2},
\label{lhv-final-condition}
\end{equation}
where $\mu_1$ and $\mu_2$ are the two largest singular values of the matrix $T$,
then the state $\rho$ is local.
\end{theorem}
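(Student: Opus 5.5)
The plan is to chain Theorem~\ref{theorem-2} with Theorem~\ref{theorem-1}; no new estimates are needed. By Theorem~\ref{theorem-1}, $\rho$ generates only lhv correlations if and only if
\begin{equation*}
E(\vec{r},T,\vec{a},\vec{a}^\prime,\vecg{\beta},\vecg{\beta}^\prime)\le \tfrac{3\sqrt{3}}{2}
\end{equation*}
for all unit vectors $\vec{a},\vec{a}^\prime\in{\mathbb{R}}^3$ and all $\vecg{\beta},\vecg{\beta}^\prime\in\mathcal{S}$. It therefore suffices to show that the supremum of $E$ over this admissible set does not exceed $\tfrac{3\sqrt{3}}{2}$.

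First I fix arbitrary admissible $\vec{a},\vec{a}^\prime,\vecg{\beta},\vecg{\beta}^\prime$. Theorem~\ref{theorem-2} bounds $E$ by the maximum over the same set, hence
\begin{equation*}
E(\vec{r},T,\vec{a},\vec{a}^\prime,\vecg{\beta},\vecg{\beta}^\prime)\le \tfrac{\sqrt{3}}{2}|\vec{r}|+2\sqrt{\mu_1^2+\mu_2^2}.
\end{equation*}
Next, hypothesis (\ref{lhv-final-condition}) bounds the right-hand side by $\tfrac{3\sqrt{3}}{2}$. Since the admissible quadruple was arbitrary, inequality (\ref{CH-inequality-r-T-beta}) holds throughout, and the ``if'' direction of Theorem~\ref{theorem-1} gives that $\rho$ is local. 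Only the sufficiency direction of Theorem~\ref{theorem-1} is used, so the fact that the bound of Theorem~\ref{theorem-2} may not be attained on $\mathcal{S}$ does no harm here.

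The only point needing care sits inside Theorem~\ref{theorem-2}, specifically its application of Lemma~\ref{lemma1}. The lemma requires $\vecg{\beta}\perp\vecg{\beta}^\prime$, which need not hold for arbitrary elements of $\mathcal{S}$. I would therefore first check that the bound in Theorem~\ref{theorem-2} does not rely on orthogonality.

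If it does, I would repair it by working with $\vec{x}=\vec{a}+\vec{a}^\prime$ and $\vec{z}=\vec{a}-\vec{a}^\prime$. These are orthogonal, since $\vec{a},\vec{a}^\prime$ are unit vectors, and $|\vec{x}|^2+|\vec{z}|^2=4$. The correlation part becomes $\vec{x}^TT\vecg{\beta}+\vec{z}^TT\vecg{\beta}^\prime$. Applying Lemma~\ref{lemma1} to $T^T$ with the orthogonal pair $\vec{x},\vec{z}$ bounds this by $\mu_1|\vec{x}|+\mu_2|\vec{z}|$. Optimising over $|\vec{x}|^2+|\vec{z}|^2=4$, for instance by Cauchy--Schwarz, gives at most $2\sqrt{\mu_1^2+\mu_2^2}$. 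The local term $-\tfrac{\sqrt{3}}{2}\vec{a}\cdot\vec{r}$ is at most $\tfrac{\sqrt{3}}{2}|\vec{r}|$. Adding the two bounds reproduces Theorem~\ref{theorem-2}, and the argument above then goes through unchanged.
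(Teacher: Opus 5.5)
Your reduction is exactly the paper's: Theorem~\ref{theorem-3} is the sufficiency direction of Theorem~\ref{theorem-1} composed with the bound of Theorem~\ref{theorem-2}, and that part is fine. You are also right that the paper's proof of Theorem~\ref{theorem-2} applies Lemma~\ref{lemma1} with $\vec{y}=\vecg{\beta}$, $\vec{u}=\vecg{\beta}^\prime$, although nothing makes these orthogonal.

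Your repair, however, fails at the step where Lemma~\ref{lemma1} applied to $T^T$ is supposed to give $\vec{x}^TT\vecg{\beta}+\vec{z}^TT\vecg{\beta}^\prime\le\mu_1|\vec{x}|+\mu_2|\vec{z}|$. The reason is that Lemma~\ref{lemma1} is false as stated. Its proof drops the $i=1$ term of $\vec{z}^TT\vec{u}$ on the grounds that $\vec{y}\perp\vec{u}$, but orthogonality does not force $u_1=0$.

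Take $\vec{a}^\prime=-\vec{a}$, so that $\vec{x}=0$ and $\vec{z}=2\vec{a}$. Your inequality then claims $2\vec{a}^TT\vecg{\beta}^\prime\le2\mu_2$. But the left side equals $2\mu_1$ when $\vec{a}$ and $\vecg{\beta}^\prime$ are the top singular pair.

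This happens even for admissible data. Consider the separable state $\rho=\tfrac12(\ket{0}\bra{0}\otimes\ket{e_1}\bra{e_1}+\ket{1}\bra{1}\otimes\ket{e_2}\bra{e_2})$, with $\ket{0},\ket{1}$ the $\sigma_3$ eigenbasis and $\ket{e_1},\ket{e_2}$ the first two basis vectors of $\mathbb{C}^3$.
\begin{itemize}
\item One finds $\vec{r}=0$ and $T_{33}=\tfrac32$ as the only nonzero entry, so $\mu_2=0$.
\item Choose $\vec{a}=(0,0,1)$ and $\vecg{\beta}^\prime=\tilde{\vecg{\beta}}$ with $\vec{b}=(1,0,0)$, for which $\tilde{\beta}_3=\tfrac{\sqrt{3}}{4}$.
\item This gives $E=\tfrac{3\sqrt{3}}{4}>0$.
\end{itemize}
Both your intermediate bound and the paper's (\ref{E-bound-eq-1}) have vanishing right-hand sides here, so both are contradicted.

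The conclusion nevertheless survives via the two-qubit argument of \cite{HHH1995_Bell-violation}. That argument needs orthogonality only on the $\mathbb{R}^3$ side, where it is automatic.
\begin{enumerate}
\item Use only $|\vecg{\beta}|=|\vecg{\beta}^\prime|=1$ to get $\vec{x}^TT\vecg{\beta}+\vec{z}^TT\vecg{\beta}^\prime\le|T^T\vec{x}|+|T^T\vec{z}|$.
\item Write $\vec{x}=2\cos\theta\,\vec{c}$ and $\vec{z}=2\sin\theta\,\vec{c}^\prime$ with $\vec{c},\vec{c}^\prime$ orthonormal, as in Appendix~\ref{sec:App-maximization}.
\item Then $2(\cos\theta\,|T^T\vec{c}|+\sin\theta\,|T^T\vec{c}^\prime|)\le2\sqrt{|T^T\vec{c}|^2+|T^T\vec{c}^\prime|^2}\le2\sqrt{\mu_1^2+\mu_2^2}$.
\end{enumerate}
The last inequality is Ky Fan's maximum principle. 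For orthonormal $\vec{c},\vec{c}^\prime$, the sum $\vec{c}^TTT^T\vec{c}+\vec{c}^{\prime T}TT^T\vec{c}^\prime$ is at most $\mu_1^2+\mu_2^2$, the sum of the two largest eigenvalues of $TT^T$.

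Adding your separate bound $-\tfrac{\sqrt{3}}{2}\vec{a}\cdot\vec{r}\le\tfrac{\sqrt{3}}{2}|\vec{r}|$ yields the bound of Theorem~\ref{theorem-2}. This also avoids the joint maximization of Appendix~\ref{sec:App-maximization}. With Theorem~\ref{theorem-2} secured this way, your chaining proves Theorem~\ref{theorem-3}.
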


\section{Examples}

\paragraph*{Example 1}
Let us consider the state discussed in \cite{BCM2025_OptimalBellQubitQutrit}:
\begin{equation}
\rho = \tfrac{1}{2} 
\begin{pmatrix}
x & 0 & 0 & 0 & x & 0\\
0 & y & 0 & 0 & 0 & y\\
0 & 0 & 1-x-y & 1-x-y & 0 & 0\\
0 & 0 & 1-x-y & 1-x-y & 0 & 0\\
x & 0 & 0 & 0 & x & 0\\
0 & y & 0 & 0 & 0 & y
\end{pmatrix},
\label{state_ex-1}
\end{equation}
where $0\le x\le 1$, $0\le y\le 1$, $x+y\le 1$. As shown in
\cite{BCM2025_OptimalBellQubitQutrit}, 
this state is entangled for all values of $x$ and $y$
except for $x=y=1/3$.

Using the parametrization given in (\ref{state-2x3-general}), the state (\ref{state_ex-1})
is represented by
\begin{align}
\vec{r} & = (0,0,0),\\
\vec{R} & = \tfrac{1}{4}\big(0, 0, 3 (1 - x - 2 y), 0, 0, 0, 0, \sqrt{3} (-1 + 3 x)\big),
\end{align}
and
\begin{widetext}
\begin{equation}
T = \tfrac{3}{2}
\begin{pmatrix}
x & 0 & 0 & 1-x-y & 0 & y & 0 & 0\\
0 & -x & 0 & 0 & 1-x-y & 0 & -y & 0 \\
0 & 0 & \tfrac{1}{2}(-1+3x) & 0 & 0 & 0 & 0 & \tfrac{\sqrt{3}}{2}(-1+x+2y)
\end{pmatrix}.
\end{equation}
\end{widetext}
The above $T$ matrix has the following non-zero singular values:
\begin{align}
\mu_1=\mu_2 & = \tfrac{3}{2}\sqrt{1 + 2 x^2 + 2 x (-1 + y) + 2 (-1 + y) y},\\
\mu_3 & = \tfrac{3}{2}\sqrt{1 + 3 x^2 + 3 x (-1 + y) + 3 (-1 + y) y},
\end{align}
and in the considered region ($0\le x\le 1$, $0\le y\le 1$, $x+y\le 1$)
\begin{equation}
\mu_1>\mu_3.
\end{equation}
In Fig.~\ref{fig1} we have depicted the region in which the state (\ref{state_ex-1})
fulfills the condition (\ref{lhv-final-condition}).

\begin{figure}
\includegraphics[width=0.95\columnwidth]{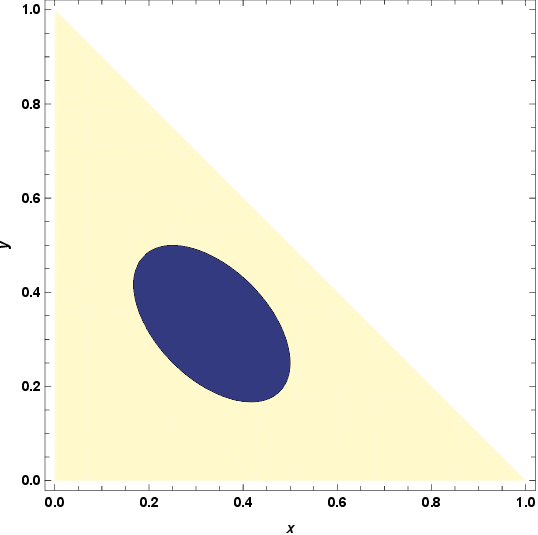}
\caption{In this figure we depicted the region of parameters (navy-blue) in which 
the condition (\ref{lhv-final-condition}) is fulfilled in the state
(\ref{state_ex-1}). }
\label{fig1}
\end{figure}

\paragraph*{Example 2}

In this example we consider a qubit-qutrit state for which 
a local vector $\vec{r}$ is nonzero.
The state has the following form
\begin{equation}
	\rho(\theta,\gamma) = \tfrac{1}{2}
	(\ket{f}\bra{f} + \ket{f(\theta)}\bra{f(\theta)}),
	\label{Ex2-state}
\end{equation}
where 
\begin{align}
	\ket{f} & = \sin\gamma \ket{0,-1} + \cos\gamma \ket{1,1}\\
	\ket{f(\theta)} & = (I_2\otimes U(\theta)) \ket{f}
\end{align}
with
\begin{equation}
	U(\theta)) = I_3 + i S_2 \sin\theta 
	- 2 S_2^2 \sin^2\tfrac{\theta}{2},
\end{equation}
and $S_2$ is given in Eq.~(\ref{spin_1}).
For the state $\rho(\theta,\gamma)$ defined in (\ref{Ex2-state}) 
we have
\begin{align}
	\vec{r} = & \big(0,0,-\cos(2\gamma)\big),\\
	\vec{R} = & \tfrac{3}{4}\big(
	-\tfrac{1}{\sqrt{2}}(\cos\theta+\cos(2\gamma))\sin\theta,0,\\
	& \tfrac{1}{4}(1+3\cos^2\theta+2\cos(2\gamma)(1+\cos\theta)),
	\tfrac{1}{2} \sin^2\theta,0,\nonumber\\
	& \tfrac{1}{\sqrt{2}}(\cos\theta-\cos(2\gamma))\sin\theta,0,
	\nonumber\\
	& \tfrac{1}{4\sqrt{3}}(-1-3\cos^2\theta
	+6\cos(2\gamma)(1+\cos\theta))
	\big).\nonumber
\end{align}
and
\begin{align}
	T_{11} & = \tfrac{3}{8\sqrt{2}}\sin(2\gamma) \sin(2\theta),\\
	T_{13} & = \tfrac{9}{16} \sin(2\gamma) \sin^2\theta,\\
	T_{14} & = \tfrac{3}{16}\big[7+\cos(2\theta)\big]\sin(2\gamma),\\
	T_{16} & = -\tfrac{3}{8\sqrt{2}}\sin(2\gamma) \sin(2\theta),\\
	T_{18} & = -\tfrac{3\sqrt{3}}{16} \sin(2\gamma) \sin^2\theta,
\end{align}
\begin{align}
	T_{22} & = \tfrac{3}{4\sqrt{2}}\sin(2\gamma) \sin\theta,\\
	T_{25} & = \tfrac{3}{2} \sin(2\gamma) \cos^2(\tfrac{\theta}{2}),\\
	T_{27} & = -\tfrac{3}{4\sqrt{2}}\sin(2\gamma) \sin\theta,
\end{align}
\begin{align}
	T_{31} & = \tfrac{3}{4\sqrt{2}}\big[1+\cos(2\gamma)\cos\theta\big] 
	\sin\theta,\\
	T_{33} & = -\tfrac{3}{16} \big[
	4\cos^2(\tfrac{\theta}{2}) + (1+3\cos^2\theta) \cos(2\gamma)
	\big],\\
	T_{34} & = -\tfrac{3}{8} \cos(2\gamma) \sin^2\theta,\\
	T_{36} & = \tfrac{3}{4\sqrt{2}} \big[1-\cos(2\gamma)\cos\theta\big] 
	\sin\theta,\\
	T_{38} & = \tfrac{\sqrt{3}}{16} \big[
	(2+\cos^2\theta)\cos(2\gamma) - 12\cos^2(\tfrac{\theta}{2})
	\big],
\end{align}
and all other matrix elements of $T$ are equal to 0.
One can check that for this state
the value of
$\tfrac{\sqrt{3}}{2}|\vec{r}|
+ 2 \sqrt{\mu_1^2 +\mu_2^2}$ is greater then or equal to
$\frac{3\sqrt{3}}{2}$
for all vales of parameters $\theta$ and $\gamma$, we have plotted 
this value in Fig.~\ref{fig2a}.

However, there are regions in the parameter space where 
$2 \sqrt{\mu_1^2 +\mu_2^2} \le \frac{3\sqrt{3}}{2}$.
We have presented the value of $2 \sqrt{\mu_1^2 +\mu_2^2}$
in~Fig.~\ref{fig2b}.
For example, for $\theta=4\pi/3$, $\gamma=\pi/12$ we have
$2 \sqrt{\mu_1^2 +\mu_2^2}=2.25$, 
$\tfrac{\sqrt{3}}{2}|\vec{r}|=0.75$
and the numerical value of $\tfrac{3\sqrt{3}}{2}$
is equal to $2.598$.

\begin{figure}
	\centering
	\begin{subfigure}{0.95\columnwidth}
		\includegraphics[width=\columnwidth]{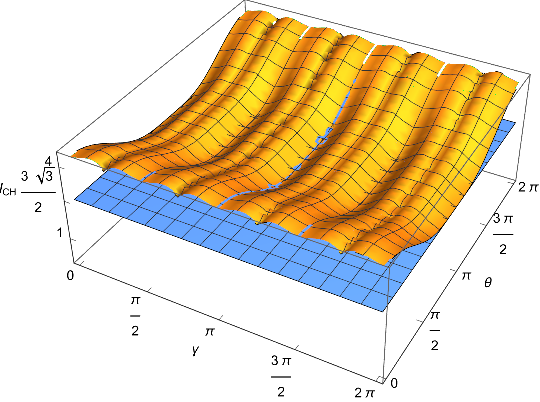}
		\caption{The value of $\tfrac{\sqrt{3}}{2}|\vec{r}|
			+ 2 \sqrt{\mu_1^2 +\mu_2^2}$
			for the state (\ref{Ex2-state}).}
		\label{fig2a}
	\end{subfigure}
	\hfill
	\begin{subfigure}{0.95\columnwidth}
		\includegraphics[width=\columnwidth]{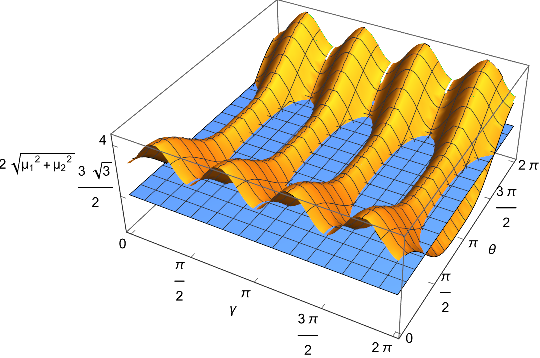}
		\caption{The value of $2 \sqrt{\mu_1^2 +\mu_2^2}$
			for the state (\ref{Ex2-state}).}
		\label{fig2b}
	\end{subfigure}
	\caption{Illustration of the role played by the local vector $\vec{r}$.}
	\label{fig2}
\end{figure}

\paragraph*{Example 3}

Authors of the paper \cite{MMH2017-2x3MES} considered families
of maximally entangled mixed states for qubit-qutrit systems.
We discuss here one of such families, namely the family of rank 2
states called true-generalized X states (TGX states):
\begin{equation}
	\rho_{\mathrm{TGX}} = \frac{1}{2}
	\begin{pmatrix}
		2 p_1 c_{\theta_1}^2 & 0 & 0 & 0 & 0 & p_1 s_{2\theta_1}\\
		0 & 2 p_2 c_{\theta_2}^2 & 0 & p_2 s_{2\theta_2} & 0 & 0\\
		0 & 0 & 0 & 0 & 0 & 0 \\
		0 & p_2 s_{2\theta_2} & 0 & 2 p_2 s_{\theta_2}^2 & 0 & 0\\
		0 & 0 & 0 & 0 & 0 & 0\\
		p_1 s_{2\theta_1} & 0 & 0 & 0 & 0 & 2 p_1 s_{\theta_1}^2
	\end{pmatrix},
	\label{state_TGX}
\end{equation}
where $p_1+p_2=1$ and $c_x=\cos x$, $s_x=\sin x$. As shown in 
\cite{MMH2017-2x3MES}, purity of this state is equal to
\begin{equation}
	P=p_1^2+p_2^2,
\end{equation}
while its negativity
\begin{multline}
	{\mathcal{N}} = - p_1 \cos^2\theta_1 
	- p_2\sin^2\theta_2 
	+ \sqrt{p_1^2 \cos^4\theta_1 + p_2^2 \sin^2(2\theta_2)}\\
	+ \sqrt{p_2^2 \sin^4\theta_2 + p_1^2 \sin^2(2\theta_1)}.
\end{multline}
For the state (\ref{state_TGX}) we have
\begin{align}
	\vec{r} = & (0,0,p_1 \cos(2\theta_1)+p_2 \cos(2\theta_2)),\\
	\vec{R} = & 
	\big(0,0,\tfrac{3}{4}[p_1+p_1\cos(2\theta_1)-2p_2\cos(2\theta_2)],
	0,0,0,0,\nonumber\\
	&\tfrac{\sqrt{3}}{4}[2-3p_1+3 p_1 \cos(2\theta_1)]\big),
\end{align}
and
\begin{align}
	T_{11} & = T_{22} = 3p_2\cos\theta_2 \sin\theta_2,\\
	T_{14} & = - T_{25} = 3p_1 \cos\theta_1 \sin\theta_1,\\
	T_{33} & = \tfrac{3}{4}(-2+3p_1+p_1\cos(2\theta_1)),\\
	T_{38} & = -\tfrac{\sqrt{3}}{4}(-3p_1 + p_1\cos(2\theta_1) 
	-2p_2\cos(2\theta_2)),
\end{align}
and all other entries of $T$ are equal to 0.
In Fig.~\ref{fig3} we considered the state $\rho_{\mathrm{TGX}}$
with minimal purity (corresponding to $p+1=1/2$). 
We have depicted the regions where the 
condition (\ref{lhv-final-condition}) is fulfilled, i.e. the state 
is surely local.

\begin{figure}
	\includegraphics[width=0.95\columnwidth]{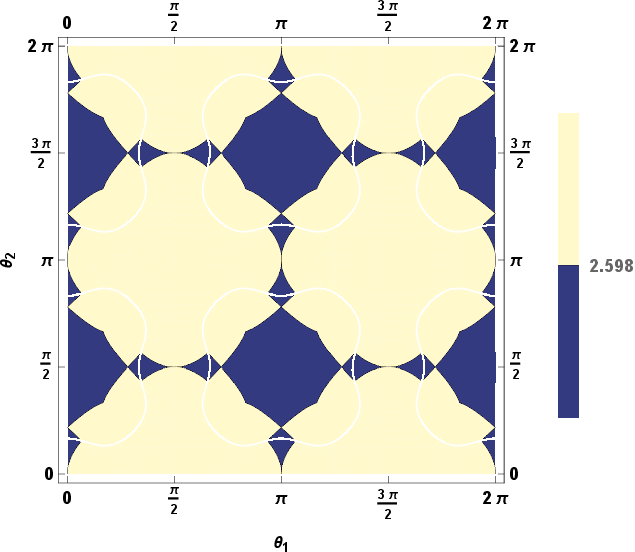}
	\caption{In this figure we depicted the region of 
	parameters (navy-blue) in which the condition 
	(\ref{lhv-final-condition}) is fulfilled in the state
	(\ref{state_TGX}) for $p_1=1/2$ (which corresponds 
	to minimal purity of the state). }
	\label{fig3}
\end{figure}

\section{Conclusions}

Lack of local realism is probably the most counterintuitive 
feature of quantum correlations. In the case of 2 qubits the 
role of local polarisations for violation of CH (CHSH ?) 
inequality  was absent. We show that in the qubit-qutrit case 
there are cases when it is extremally important for the violation. 
In fact there are mixed states for which correlation part satisfies
the Bell condition completely and this is only the local polarisation
that brakes this. Following particle physics analysis 
\cite{Barr2021,BCR2022-Bell-vector-bosons,%
BFFGM2024_QEntanglementBellViolationColliders,BCR2023_HtoZZ-anomalous,%
BCR2024_diboson-anomalous}
we shall elaborate on possible applications in the
future \cite{Caban_etal}.

\begin{acknowledgments}
We are grateful to Galileo Galilei Institute for Theoretical Physics
for hospitality during the first stages of preparation of 
this manuscript.
PH acknowledge support from the National Science Centre in Poland
under the research grant Maestro (2021/42/A/ST2/00356).
\end{acknowledgments}

\appendix
\section{Gell-Mann matrices, qutrit state and $SU(3)$ group}
\label{sec:App-Gell-Mann}

Gell-Mann matrices has the following form:
\begin{align}
& \lambda_1  = 
\begin{pmatrix}
0 & 1 & 0\\
1 & 0 & 0\\
0 & 0 & 0
\end{pmatrix},
&&\lambda_2 = 
\begin{pmatrix}
0 & -i & 0\\
i & 0 & 0\\
0 & 0 & 0
\end{pmatrix},\\
& \lambda_3  = 
\begin{pmatrix}
1 & 0 & 0\\
0 & -1 & 0\\
0 & 0 & 0
\end{pmatrix},
&&\lambda_4 = 
\begin{pmatrix}
0 & 0 & 1\\
0 & 0 & 0\\
1 & 0 & 0
\end{pmatrix},
\end{align}
\begin{align}
&\lambda_5 = 
\begin{pmatrix}
0 & 0 & -i\\
0 & 0 & 0\\
i & 0 & 0
\end{pmatrix},
&&\lambda_6 = 
\begin{pmatrix}
0 & 0 & 0\\
0 & 0 & 1\\
0 & 1 & 0
\end{pmatrix},\\
&\lambda_7 = 
\begin{pmatrix}
0 & 0 & 0\\
0 & 0 & -i\\
0 & i & 0
\end{pmatrix},
&&\lambda_8 = 
\tfrac{1}{\sqrt{3}}
\begin{pmatrix}
1 & 0 & 0\\
0 & 1 & 0\\
0 & 0 & -2
\end{pmatrix}.
\end{align}
Gell-Mann matrices are traceless, hermitian and trace-orthogonal
\begin{equation}
\tr(\lambda_i \lambda_j) = 2 \delta_{ij}.
\end{equation}
They span the Lie algebra of the $SU(3)$ group and
fulfill well-known  commutation and anticommutation relations:
\begin{align}
& [\lambda_i,\lambda_j] = 2 i \sum_k f_{ijk} \lambda_k,\\
& \{ \lambda_i,\lambda_j\} = \tfrac{4}{3} \delta_{ij} I_3 + 
2 \sum_k d_{ijk} \lambda_k,
\end{align}
where $f_{ijk}$ is completely antisymmetric while $d_{ijk}$ completely 
symmetric with the following non-vanishing independent components:
\begin{align}
& f_{123}=1, \quad f_{458}=f_{678}=\tfrac{\sqrt{3}}{2},\\
& f_{147}=f_{246}=f_{257}=f_{345}=f_{516}=f_{637}=\tfrac{1}{2},
\end{align}
and
\begin{align}
d_{118}&=d_{228}=d_{338}=-d_{888}=\tfrac{1}{\sqrt{3}},\\
d_{448}&=d_{558}=d_{668}=d_{778}=-\tfrac{1}{2\sqrt{3}}\\
d_{146}&=d_{157}=-d_{247}=d_{256}\\
&=d_{344}=d_{355}=-d_{366}=-d_{377}=\tfrac{1}{2}.
\end{align}
It also holds
\begin{align}
\lambda_i \lambda_j = \tfrac{2}{3} \delta_{ij} I_3 
+ \sum_k(d_{ijk} + i f_{ijk})\lambda_k.
\end{align}

Following e.g. \cite{AMM1997_qutrit} we consider the adjoint representation
of $SU(3)$, $\gamma$, acting in eight dimensional space ${\mathbb{R}}^8$:
\begin{equation}
SU(3) \ni U \mapsto \gamma(U), \quad
U \lambda_i U^\dagger = \sum_j \gamma(U)_{ij} \lambda_j.
\end{equation}
It holds
\begin{align}
& \gamma(U_1) \gamma(U_2) = \gamma(U_1 U_2), \quad
\gamma(U)\in SO(8),\\
& \gamma(U)_{ij} = \tfrac{1}{2} \tr(\lambda_i U \lambda_j U^\dagger). 
\end{align}
It is important to stress that matrices $\gamma(U)$ form a proper, eight
parameter subset (subgroup) of the whole twenty-eight parameter orthogonal 
group $SO(8)$.

With the help of the symbols $f_{ijk}$ and $d_{ijk}$ one can define 
antisymmetric and symmetric products of vectors in ${\mathbb{R}}^8$:
\begin{align}
&(\vecg{\alpha}\wedge\vecg{\beta})_k = \sum_{ij} f_{kij} \alpha_i \beta_j,\\
&(\vecg{\alpha}\star\vecg{\beta})_k =\sqrt{3} \sum_{ij} d_{kij} \alpha_i \beta_j.
\label{star-product-def}
\end{align}
These products have an important property that they transform like vectors in 
${\mathbb{R}}^8$ under matrices $\gamma(U)$:
\begin{align}
& (\gamma(U)\vecg{\alpha}) \wedge (\gamma(U)\vecg{\beta}) =
\gamma(U)(\vecg{\alpha}\wedge \vecg{\beta}),\\
& (\gamma(U)\vecg{\alpha}) \star (\gamma(U)\vecg{\beta}) =
\gamma(U)(\vecg{\alpha}\star \vecg{\beta}).
\label{star-product-transf}
\end{align}

Using this formalism we can describe a pure state of a qutrit.
A general qutrit state can be written as
\begin{equation}
\rho = \tfrac{1}{3} ( I_3+ \sqrt{3} \sum_i N_i \lambda_i).
\end{equation}
It can be shown (see, e.g.,\cite{AMM1997_qutrit,BN2003_Coherence-vector}) 
that conditions
$\rho^\dagger=\rho$, $\rho^2=\rho$, $\rho\ge0$, $\tr\rho=1$
are equivalent to:
\begin{equation}
\vec{N}\in{\mathbb{R}}^8,\quad
\vec{N}\cdot\vec{N}=1,\quad
\vec{N}\star\vec{N}=\vec{N}.
\label{N-qutrit-conditions}
\end{equation}
The set of vectors $\vec{N}$ fulfilling the above conditions (\ref{N-qutrit-conditions})
is a connected, simply connected four dimensional region of the sphere $S^7$.
This region is invariant under the action of the matrices $\gamma(U)$ 
(c.f. Eq.~(\ref{star-product-transf})) but not under general $SO(8)$ rotations.
Moreover, the action of the adjoint representation $\gamma$ is transitive on 
this region.

\section{Proof of the lemma}
\label{sec:App-lemma}

Let us remind first a few facts about singular values and singular vectors of 
a matrix.
Let $A$ be a real $m\times n$ matrix. Then:
\begin{list}{}{}
\item[(i)] Eigenvalues of the matrix $A^T A$ are
real and nonnegative; their square roots 
$\mu_1\ge\mu_2\ge\dots\ge\mu_n\ge0$
are singular values of $A$.
\item[(ii)] The number $r$ of non-zero singular values is equal to the rank of 
the matrix $A$, $r=\text{rank}(A)$.
\item[(iii)] Eigenvectors $\{\vec{v}_i\}$
of $A^T A$ corresponding to $\mu_{i}^{2}$ are right singular vectors of $A$:
\begin{equation}
A^T A \vec{v}_i = \mu_{i}^{2} \vec{v}_i , \quad i=1,\dots,n.
\label{App:lemma-eq1}
\end{equation}
These vectors can always be made orthonormal. Thus, from now on we will assume 
that $\{\vec{v}_1,\dots,\vec{v}_n\}$ are orthonormal; they form a basis of
${\mathbb{R}}^n$.
\item[(iv)] It holds 
\begin{equation}
A \vec{v}_i \perp A\vec{v}_j \quad \text{for}\quad i\not= j,
\label{App:lemma-eq2}
\end{equation}
and
\begin{equation}
||A \vec{v}_i || = \mu_i,\quad i=1,\dots,n.
\label{App:lemma-eq3}
\end{equation}
\item[(v)] For $i=1,\dots,r$ one can define orthonormal vectors
\begin{equation}
\vec{w}_i = \frac{1}{\mu_i} A \vec{v}_i.
\label{App:lemma-eq4}
\end{equation}
If $r<m$ then one can complete the set of vectors $\{ \vec{w}_i \}_{i=1}^{r}$
to an orthonormal basis $\{ \vec{w}_i \}_{i=1}^{m}$ of ${\mathbb{R}}^m$.
Vectors $\vec{w}_i$ are left singular vectors of $A$.
\end{list}
Using these facts we will now prove the following lemma

\begin{lemma}
Let $T$ be a real, rectangular $m\times n$ matrix. For arbitrary vectors
$\vec{x},\vec{z}\in{\mathbb{R}}^m$, $\vec{y},\vec{u}\in{\mathbb{R}}^n$
such that $\vec{y}\perp\vec{u}$
it holds
\begin{equation}
\vec{x}^T T \vec{y} + \vec{z}^T T \vec{u}
\le \mu_{1} ||\vec{x}||\, ||\vec{y}|| + \mu_{2} ||\vec{z}||\, ||\vec{u}|| ,
\end{equation}
where $\mu_{1},\mu_2$ are two largest singular values of the matrix $T$.
The equality holds when $\vec{x}$, $\vec{y}$ and $\vec{z}$, $\vec{u}$
are singular vectors of $T$
corresponding to the singular values $\mu_1$ and $\mu_2$, respectively.
\end{lemma}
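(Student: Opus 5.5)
The plan is to write the left-hand side as a trace pairing and then apply von Neumann's trace inequality. Set $M=\vec{y}\,\vec{x}^T+\vec{u}\,\vec{z}^T$, an $n\times m$ matrix. Then
\begin{equation}
\vec{x}^T T\vec{y}+\vec{z}^T T\vec{u}=\tr(TM).
\end{equation}
Von Neumann's inequality gives $\tr(TM)\le\sum_k \mu_k s_k(M)$, with both singular spectra sorted decreasingly. Since $\text{rank}\,M\le 2$, only $\mu_1,\mu_2$ enter, and the problem reduces to computing $s_1(M)$ and $s_2(M)$.

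Using $\vec{y}\perp\vec{u}$ we get
\begin{equation}
M^TM=||\vec{y}||^2\,\vec{x}\vec{x}^T+||\vec{u}||^2\,\vec{z}\vec{z}^T .
\end{equation}
If in addition $\vec{x}\perp\vec{z}$, this is diagonal in the orthonormal pair $\vec{x}/||\vec{x}||$, $\vec{z}/||\vec{z}||$. The singular values of $M$ are then exactly $||\vec{x}||\,||\vec{y}||$ and $||\vec{z}||\,||\vec{u}||$. Von Neumann's inequality then yields the claimed bound, provided $||\vec{x}||\,||\vec{y}||\ge||\vec{z}||\,||\vec{u}||$, so that $\mu_1$ is paired with the larger product. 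For the equality statement, I would take $\vec{y}=||\vec{y}||\vec{v}_1$, $\vec{u}=||\vec{u}||\vec{v}_2$, $\vec{x}=||\vec{x}||\vec{w}_1$, $\vec{z}=||\vec{z}||\vec{w}_2$, with $\vec{v}_i,\vec{w}_i$ as in (iii)--(v). Relations (\ref{App:lemma-eq3})--(\ref{App:lemma-eq4}) then give $\vec{w}_i^TT\vec{v}_i=\mu_i$ directly. Alternatively, the proof can be made elementary without citing von Neumann. Expand $\vec{y},\vec{u}$ in the basis $\{\vec{v}_i\}$ and $\vec{x},\vec{z}$ in $\{\vec{w}_i\}$, so that $\tr(TM)=\sum_i\mu_i\,(\vec{w}_i^T M^T\vec{v}_i)$. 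Then bound $\sum_i \mu_i c_i$, where the $c_i$ are diagonal entries of a matrix with the above singular values. A Schur--Horn/majorization argument gives $c_1\le s_1$ and $c_1+c_2\le s_1+s_2$, and Abel summation finishes.

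The main obstacle is that the hypotheses as literally stated are not enough, and I would flag this rather than try to prove the lemma verbatim. First, $\vec{x}\perp\vec{z}$ is needed. Take $T=\mathrm{diag}(1,0)$, $\vec{x}=\vec{z}=(1,0)$, $\vec{y}=(1,1)/\sqrt2$, $\vec{u}=(1,-1)/\sqrt2$. The left side is then $\sqrt2$, while the right side is $1$. Second, the ordering $||\vec{x}||\,||\vec{y}||\ge||\vec{z}||\,||\vec{u}||$ is needed, or else the bound must read $\mu_1\max+\mu_2\min$ of the two products. For example, take $T=\mathrm{diag}(2,1)$, a tiny $\vec{x}$, and $\vec{z},\vec{u}$ along $\vec{e}_1$; then the left side is about $2||\vec{z}||\,||\vec{u}||>\mu_2||\vec{z}||\,||\vec{u}||$. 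In the application, $\vec{x}=\vec{a}+\vec{a}'$ and $\vec{z}=\vec{a}-\vec{a}'$ are automatically orthogonal, because $\vec{a},\vec{a}'$ are unit vectors. Also $\vecg{\beta}\perp\vecg{\beta}'$ must hold, or the orthogonality assumption on $\vec{y},\vec{u}$ must be justified separately.

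I would therefore prove the corrected version: add $\vec{x}\perp\vec{z}$ and use the sorted pairing. This still yields (\ref{E-bound-eq-1}) with $\mu_1\max(|\vec{a}+\vec{a}'|,|\vec{a}-\vec{a}'|)+\mu_2\min(\cdot)$. Relabelling $\vec{a}'\to-\vec{a}'$ makes this exactly the expression whose maximum is computed in Appendix~\ref{sec:App-maximization}. The orthogonality $\vecg{\beta}\perp\vecg{\beta}'$ is the remaining point I would have to examine carefully in the main text.
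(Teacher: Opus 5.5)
Your counterexamples are correct: the lemma is false as literally stated. The paper's proof breaks at (\ref{App:lemma-eq8}), where the $\vec{z},\vec{u}$ sum is started at $i=2$ ``due to $\vec{y}\perp\vec{u}$''. Orthogonality only gives $\sum_i y_iu_i=0$, not $u_1=0$. That would follow only if $\vec{y}\propto\vec{v}_1$, which is the presumed maximizer rather than something established. Once $u_1\neq0$ is allowed, the two terms can share the top singular direction, and both of your examples exploit exactly this. Your first example has $\|\vec{x}\|\,\|\vec{y}\|=\|\vec{z}\|\,\|\vec{u}\|$, so the ordering holds, and it isolates the need for $\vec{x}\perp\vec{z}$. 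In the second, $\vec{y}\perp\vec{u}$ forces $\vec{y}\propto\vec{e}_2$. Taking $\vec{x}\propto\vec{e}_2$ as well shows the ordering hypothesis is needed even when $\vec{x}\perp\vec{z}$.

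Your proof of the corrected statement takes a different route from the paper. The paper bounds each term separately by a weighted Cauchy--Schwarz inequality in the singular bases. That cannot capture the interaction between the two terms, which is precisely what goes wrong. You put both terms into the rank-two matrix $M$ and apply von Neumann's trace inequality, which accounts for that interaction. Under $\vec{x}\perp\vec{z}$ and $\vec{y}\perp\vec{u}$, the singular values of $M$ are exactly the two norm products, which gives the sorted bound. The equality case works with the paired vectors $\vec{w}_i=T\vec{v}_i/\mu_i$; the pairing matters when $\mu_1=\mu_2$. In your elementary variant, the majorization you need is the rectangular one (Ky Fan/Thompson: partial sums of diagonal entries are bounded by partial sums of singular values), not Schur--Horn proper. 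The Abel summation step is fine.

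On your open point: $\vecg{\beta}$ and $\vecg{\beta}^\prime$ come from independent $U,U^\prime$, so they are not orthogonal in general. Indeed (\ref{E-bound-eq-1}) need not hold pointwise. Take $\vecg{\beta}=\vecg{\beta}^\prime$ and $\vec{a}^\prime=-\vec{a}$: the correlation part is $2\vec{a}^TT\vecg{\beta}$ against $2\mu_2$, which fails once $\vec{a}\parallel T\vecg{\beta}$ and $\|T\vecg{\beta}\|>\mu_2$.

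You do not need that orthogonality, though. With only $\vec{x}\perp\vec{z}$, the cross term in $\tr(M^TM)$ vanishes. So for unit $\vecg{\beta},\vecg{\beta}^\prime$ you get $s_1^2+s_2^2=|\vec{a}+\vec{a}^\prime|^2+|\vec{a}-\vec{a}^\prime|^2=4$, hence $\tr(TM)\le\mu_1s_1+\mu_2s_2\le2\sqrt{\mu_1^2+\mu_2^2}$. Adding the trivial bound $\tfrac{\sqrt{3}}{2}|\vec{r}|$ on the local term gives (\ref{I_CH-bound}) directly. So Theorems~\ref{theorem-2} and~\ref{theorem-3} survive, even though the lemma and (\ref{E-bound-eq-1}) do not.
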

\begin{proof}
We can expand vectors $\vec{y}$, $\vec{u}$ in the basis $\{\vec{v}_i\}$ and
$\vec{x}$, $\vec{z}$ in the basis $\{\vec{w}_j\}$:
\begin{equation}
\vec{x}= \sum_{i=1}^{m} x_i \vec{w}_i,\,
\vec{z}= \sum_{i=1}^{m} z_i \vec{w}_i,\,
\vec{y}= \sum_{i=1}^{n} y_i \vec{v}_i,\,
\vec{u}= \sum_{i=1}^{n} u_i \vec{v}_i.
\end{equation}
Using (\ref{App:lemma-eq1},\ref{App:lemma-eq2},\ref{App:lemma-eq4}) we obtain
\begin{equation}
\vec{x}^T T \vec{y} + \vec{z}^T T \vec{u} = 
\sum_{i=1}^r  \mu_i x_i y_i + 
\sum_{i=2}^r  \mu_i z_i u_i.
\label{App:lemma-eq8}
\end{equation}
Now, we can define the following quadratic form 
$g(\vec{x},\vec{y})=\sum_{i=1}^r \mu_i x_i y_i$ and using the Cauchy-Schwartz 
inequality for this form $g^2(\vec{x},\vec{y})\le g(\vec{x},\vec{x}) g(\vec{y},\vec{y})$
we obtain
\begin{equation}
\big[ \sum_{i=1}^r  \mu_i x_i y_i \big]^2 \le 
\big[ \sum_{i=1}^r  \mu_i x_i^2 \big]  
\big[ \sum_{i=1}^r  \mu_i y_i^2 \big].
\label{App:lemma-eq9}
\end{equation}
The maximal value of the right hand side of (\ref{App:lemma-eq9})
is attained when $x_1=||\vec{x}||$, $x_i=0$ for $i>1$ and
$y_1=||\vec{y}||$, $y_i=0$ for $i>1$ (i.e. for $\vec{x}=||\vec{x}||\vec{w}_1$
and $\vec{y}=||\vec{y}||\vec{v}_1$). Notice also that for this choice 
an inequality in (\ref{App:lemma-eq9}) becomes an equality.
Thus, we have shown that
\begin{equation}
\max_{\vec{x}, \vec{y}}\{ \vec{x}^T T \vec{y} \} = \mu_1 ||\vec{x}||\, ||\vec{y}||,
\label{App:lemma-eq10}
\end{equation}
where me maximize over vectors with constant norm, i.e. $||\vec{x}||=\text{const.}$,
$||\vec{y}||=\text{const.}$
The maximal value in (\ref{App:lemma-eq10}) is attained for 
$\vec{x}=||\vec{x}||\vec{w}_1$
and $\vec{y}=||\vec{y}||\vec{v}_1$.

Now, when we consider the whole right hand side of (\ref{App:lemma-eq8}), using 
the same method we get
\begin{multline}
\sum_{i=1}^r  \mu_i x_i y_i + 
\sum_{i=2}^r  \mu_i z_i u_i \le
\big[ \sum_{i=1}^r  \mu_i x_i^2 \big]^{1/2}  
\big[ \sum_{i=1}^r  \mu_i y_i^2 \big]^{1/2}\\
+
\big[ \sum_{i=2}^r  \mu_i z_i^2 \big]^{1/2}  
\big[ \sum_{i=2}^r  \mu_i u_i^2 \big]^{1/2},
\label{App:lemma-eq11}
\end{multline}
where the second sums on the right hand side starts with 2 due to the 
condition $\vec{y}\perp\vec{u}$
and now the maximal value of the right hand side of (\ref{App:lemma-eq11})
is attained when $x_1=||\vec{x}||$, $x_i=0$ for $i>1$, 
$y_1=||\vec{y}||$, $y_i=0$ for $i>1$,
$z_2=||\vec{z}||$, $z_i=0$ for $i\not=2$, and
$u_2=||\vec{u}||$, $u_i=0$ for $i\not=2$ 
(i.e. for $\vec{x}=||\vec{x}||\vec{w}_1$,
$\vec{y}=||\vec{y}||\vec{v}_1$,
$\vec{z}=||\vec{z}||\vec{w}_2$, and
$\vec{u}=||\vec{u}||\vec{v}_2$).
This ends the proof of the lemma.
\end{proof}

\section{Maximization procedure}
\label{sec:App-maximization}

Here we describe how to find
\begin{equation}
\max_{\vec{a},\vec{a}^\prime} 
\big\{ 
-\tfrac{\sqrt{3}}{2} \vec{a}\cdot\vec{r} 
+ \mu_1 |\vec{a}+\vec{a}^\prime| +
\mu_2 |\vec{a}-\vec{a}^\prime|
\big\},
\end{equation}
where we maximize over all unit vectors 
$\vec{a},\vec{a}^\prime\in{\mathbb{R}}^3$.

Since $(\vec{a}+\vec{a}^\prime)\cdot(\vec{a}-\vec{a}^\prime)=0$,
following \cite{HHH1995_Bell-violation}, we can always write 
\begin{equation}
	\vec{a}+\vec{a}^\prime = 2 \cos\theta\,\, \vec{c},\quad
	\vec{a}-\vec{a}^\prime = 2 \sin\theta\,\, \vec{c}^\prime,
\end{equation}
where
$\vec{c}\perp\vec{c}^\prime$,
$|\vec{c}|=|\vec{c}^\prime|=1$, $\theta\in[0,\pi/2]$,
and maximization over all unit vectors $\vec{a}$, $\vec{a}^\prime$ is equivalent 
to maximization over all mutually orthogonal, unit vectors $\vec{c}$, 
$\vec{c}^\prime$ and $\theta\in[0,\pi/2]$.

Therefore
\begin{widetext}
\begin{multline}
\max_{\vec{a},\vec{a}^\prime} 
\big\{ 
-\tfrac{\sqrt{3}}{2} \vec{a}\cdot\vec{r} 
+ \mu_1 |\vec{a}+\vec{a}^\prime| +
\mu_2 |\vec{a}-\vec{a}^\prime|
\big\}
=\max_{\vec{c},\vec{c}^\prime,\theta} 
\big\{
-\tfrac{\sqrt{3}}{2} (\vec{c}\cdot\vec{r} \cos\theta +
\vec{c}^\prime\cdot \vec{r}\sin\theta)
+ 2 \mu_1 \cos\theta + 2 \mu_2 \sin\theta)
\big\}\\
=\max_{\vec{c},\vec{c}^\prime,\theta} 
\big\{
[-\tfrac{\sqrt{3}}{2} \vec{c}\cdot\vec{r} 
+ 2 \mu_1] \cos\theta +
[-\tfrac{\sqrt{3}}{2} \vec{c}^\prime\cdot \vec{r}
+ 2 \mu_2] \sin\theta
\big\}\\
=\max_{\vec{c},\vec{c}^\prime} 
\Big\{
\sqrt{
[-\tfrac{\sqrt{3}}{2} \vec{c}\cdot\vec{r} + 2 \mu_1]^2 +
[-\tfrac{\sqrt{3}}{2} \vec{c}^\prime\cdot \vec{r} + 2 \mu_2]^2
}
\Big\}.
\label{maximization-1}
\end{multline}
Next, we can choose the third vector
$\vec{c}^{\prime\prime}$ orthogonal to both $\vec{c}$ and $\vec{c}^\prime$.
With this choice 
$x\equiv(\vec{c}\cdot\vec{r})/|\vec{r}|$,
$y\equiv(\vec{c}^{\prime}\cdot\vec{r})/|\vec{r}|$,
$z\equiv(\vec{c}^{\prime\prime}\cdot\vec{r})/|\vec{r}|$
are directional cosines of the vector $\vec{r}$ with respect to the
orthonormal basis $\{\vec{c},\vec{c}^\prime,\vec{c}^{\prime\prime}\}$
and it holds $x^2+y^2+z^2=1$.
Therefore
\begin{multline}
\max_{\vec{c},\vec{c}^\prime} 
\Big\{
\sqrt{
[-\tfrac{\sqrt{3}}{2} \vec{c}\cdot\vec{r} + 2 \mu_1]^2
+
[-\tfrac{\sqrt{3}}{2} \vec{c}^\prime\cdot \vec{r} + 2 \mu_2]^2
}
\Big\}
=\max_{x, y}
\Big\{
\sqrt{
[-\tfrac{\sqrt{3}}{2}|\vec{r}| x + 2 \mu_1]^2
+
[-\tfrac{\sqrt{3}}{2} |\vec{r}| y + 2 \mu_2]^2
}
\Big\}\\
=\max_{x, y,x\ge0,y\ge0}
\Big\{
\sqrt{
[\tfrac{\sqrt{3}}{2}|\vec{r}| x + 2 \mu_1]^2
+
[\tfrac{\sqrt{3}}{2} |\vec{r}| y + 2 \mu_2]^2
}
\Big\}\\
=\max_{x, y,x\ge0,y\ge0}
\Big\{
\sqrt{
\tfrac{3}{4}|\vec{r}|^2(x^2+y^2)
+ 2 \sqrt{3} |\vec{r}| (\mu_1 x + \mu_2 y)
+4(\mu_1^2+\mu_2^2)
}
\Big\},
\end{multline}
and $x$, $y$ are bounded by the constraint $x^2+y^2 \le 1$.
Next, simple calculation of global extremum of the above function shows that 
the maximal value is attained on the boundary
($x^2+y^2=1$) for $x=\mu_1/\sqrt{\mu_1^2 +\mu_2^2}$ and
$y=\mu_2/\sqrt{\mu_1^2 +\mu_2^2}$.
Therefore, we finally have
\begin{multline}
\max_{x, y,x\ge0,y\ge0}
\Big\{
\sqrt{
\tfrac{3}{4}|\vec{r}|^2(x^2+y^2)
+ 2 \sqrt{3} |\vec{r}| (\mu_1 x+ \mu_2 y)
+4(\mu_1^2+\mu_2^2)
}
\Big\}
=\sqrt{
\tfrac{3}{4}|\vec{r}|^2
+ 2 \sqrt{3} |\vec{r}| \sqrt{\mu_1^2+\mu_2^2}
+4 (\mu_1^2+\mu_2^2)
}\\
= \tfrac{\sqrt{3}}{2}|\vec{r}| + 2 \sqrt{\mu_1^2+\mu_2^2}.
\end{multline}
\end{widetext}


%

\end{document}